\documentclass[a4paper,twocolumn,superscriptaddress,11pt,accepted=2019-04-25]{quantumarticle}
\pdfoutput=1
\usepackage{amsfonts}
\usepackage{enumitem}
\usepackage{amssymb}
\usepackage{extarrows}
\usepackage{amsmath}
\usepackage{amstext}
\usepackage{graphicx}
\usepackage{latexsym}
\usepackage{amsthm}
\usepackage{mathtools}
\usepackage{amsthm}
\usepackage{tikz}
\usepackage{tabularx}
	\newcolumntype{L}{>{\raggedright\arraybackslash}X}
\usepackage{mathrsfs} 

\setcounter{MaxMatrixCols}{10}
\setcounter{secnumdepth}{2}

\def\doint{\mathop{\displaystyle \oint}}

\makeatletter
  \newcommand{\displaybump}{\hbox to \@totalleftmargin{\hfil}}
\makeatother

\begin{document}

\newtheorem{theorem}{Theorem}
\newtheorem{acknowledgement}[theorem]{Acknowledgment}
\newtheorem{algorithm}[theorem]{Algorithm}
\newtheorem{axiom}{Axiom}
\newtheorem{claim}[theorem]{Claim}
\newtheorem{conclusion}[theorem]{Conclusion}
\newtheorem{condition}[theorem]{Condition}
\newtheorem{conjecture}[theorem]{Conjecture}
\newtheorem{corollary}[theorem]{Corollary}
\newtheorem{criterion}[theorem]{Criterion}
\newtheorem{definition}[theorem]{Definition}
\newtheorem{example}[theorem]{Example}
\newtheorem{exercise}[theorem]{Exercise}
\newtheorem{lemma}[theorem]{Lemma}
\newtheorem{notation}[theorem]{Notation}
\newtheorem{problem}[theorem]{Problem}
\newtheorem{proposition}[theorem]{Proposition}
\newtheorem{remark}[theorem]{Remark}
\newtheorem{solution}[theorem]{Solution}
\newtheorem{summary}[theorem]{Summary}
\newtheorem{prop}[theorem]{Proposition}
\newtheorem{request}{Request}

\title{Infinitesimal and Infinite Numbers as an Approach to Quantum Mechanics}

\author{Vieri Benci}
\email{vieri.benci@unipi.it}
\orcid{0000-0003-0454-0939}
\affiliation{Dipartimento di Matematica, Universit\`{a} degli Studi di Pisa, Via F. Buonarroti 1/c, 56127 Pisa, Italy}
\author{Lorenzo Luperi Baglini}
\email{lorenzo.luperi.baglini@univie.ac.at}
\orcid{0000-0002-0559-0770}
\affiliation{Fakult\"{a}t f\"{u}r Mathematik, Universit\"{a}t Wien, Oskar-Morgenstern Platz 1, 1090 Vienna, Austria}
\author{Kyrylo Simonov}
\email{kyrylo.simonov@univie.ac.at}
\orcid{0000-0001-6764-8555}
\affiliation{Fakult\"{a}t f\"{u}r Mathematik, Universit\"{a}t Wien, Oskar-Morgenstern Platz 1, 1090 Vienna, Austria}

\maketitle

\begin{abstract}
Non-Archimedean mathematics is an approach based on fields which contain infinitesimal and infinite elements. Within this approach, we construct a space of a particular class of generalized functions, ultrafunctions. The space of ultrafunctions can be used as a richer framework for a description of a physical system in quantum mechanics. In this paper, we provide a discussion of the space of ultrafunctions and its advantages in the applications of quantum mechanics, particularly for the Schr\"{o}dinger equation for a Hamiltonian with the delta function potential.
\end{abstract}

\keywords{ultrafunctions; delta function; non-Archimedean mathematics; nonstandard analysis; quantum mechanics; self-adjoint operators; Schr\"{o}dinger equation}

\section{Introduction\label{intro}}

Quantum mechanics is a highly successful physical theory, which provides a counter-intuitive but accurate description of our world. During more than 80 years of its history, there were developed various formalisms of quantum mechanics that use the mathematical notions of different complexity to derive its basic principles. The standard approach to quantum mechanics handles linear operators, representing the observables of the quantum system, that act on the vectors of a Hilbert space representing the physical states. However, the existing formalisms include not only the standard approach but as well some more abstract approaches that go beyond Hilbert space. A notable example of such an abstract approach is the algebraic quantum mechanics, which considers the observables of the quantum system as a non-abelian $C^*$-algebra, and the physical states as positive functionals on it~\cite{Strocchi}.

Non-Archimedean mathematics (particularly, nonstandard analysis) is a framework that treats the infinitesimal and infinite quantities as numbers. Since the introduction of nonstandard analysis by Robinson~\cite{Robinson} non-Archimedean mathematics has found a plethora of applications in physics~\cite{Arkeryd, Albeverio1988, Albeverio1986, Harthong1981, Harthong1984}, particularly in quantum mechanical problems with singular potentials, such as $\delta$-potentials~\cite{Farrukh, Albeverio1979, Bagarello, AlbeverioBook, Raab}.

In this paper, we build a non-Archimedean approach to quantum mechanics in a simpler way through a new space, which can be used as a basic construction in the description of a physical system, by analogy with the Hilbert space in the standard approach. This space is called the space of ultrafunctions, a particular class of non-Archimedean generalized functions~\cite{ultra, milano, algebra, belu2012, belu2013, BLS, beyond}. The ultrafunctions are defined on the hyperreal field $\mathbb{R}^*$, which extends the reals $\mathbb{R}$ by including infinitesimal and infinite elements into it. Such a construction allows studying the problems which are difficult to solve and formalize within the standard approach. For example, variational problems, that have no solutions in standard analysis, can be solved in the space of ultrafunctions~\cite{BLS}. In this way, non-Archimedean mathematics as a whole and the ultrafunctions as a particular propose a richer framework, which highlights the notions hidden in the standard approach and paves the way to a better understanding of quantum mechanics.

The paper is organized as follows. In Section~\ref{intro} we introduce the needed notations and the notion of a non-Archimedean field. In Section~\ref{lt} we introduce a particular non-Archimedean field, the field of Euclidean numbers $\mathbb{E}$, through the notion of $\Lambda$-limit, which is a useful, straightforward approach to the nonstandard analysis. In Section~\ref{Ultrafunctions} we introduce the space of ultrafunctions, which are a particular class of generalized functions. In Section~\ref{QM} we apply the ultrafunctions approach to quantum mechanics and discuss its advantages in contrast to the standard approach. In Section~\ref{deltaP} we provide a discussion of a quantum system with a delta function potential for the standard and ultrafunctions approaches. Last but not least, in Section~\ref{concl} we provide the conclusions.

\subsection{Notations\label{not}}

Let $\Omega $\ be an open subset of $\mathbb{R}^{N}$, then

\begin{itemize}
\item $C\left( \Omega \right) $ denotes the set of continuous functions defined on $\Omega \subset \mathbb{R}^{N}$,
\item $\mathcal{C}_{c}\left( \Omega \right) $ denotes the set of continuous functions in $C\left( \Omega \right) $ having compact support in $\Omega$,
\item $\mathcal{C}^{k}\left( \Omega \right) $ denotes the set of functions defined on $\Omega \subset \mathbb{R}^{N}$ which have continuous derivatives up to the order $k$,
\item $\mathcal{C}_{c}^{k}\left( \Omega \right) $ denotes the set of functions in $C^{k}\left( \Omega \right) \ $having compact support,
\item $\mathcal{D}\left( \Omega \right) $ denotes the space of infinitely differentiable functions with compact support defined almost everywhere in $\Omega$,
\item $L^{2}\left( \Omega \right) $ denotes the space of square integrable functions defined almost everywhere in $\Omega$,
\item $\mathcal{L}^{1}_{loc}\left( \Omega \right) $ denotes the space of locally integrable functions defined almost everywhere in $\Omega$,
\item $\mathfrak{mon}(x)=\{y\in \mathbb{E}^{N}\ |\ x\sim y\}\ $ (see Def. \ref{MG}),
\item given any set $E\subset X$, $\chi _{E}:X\rightarrow \mathbb{R}$ denotes the characteristic function of $E$, namely,
\begin{equation*}
\chi _{E}(x):=\left\{ 
\begin{array}{cc}
1 & \text{if}\ \ x\in E, \\ 
&  \\ 
0 & \text{if}\ \ x\notin E,
\end{array}
\right.
\end{equation*}
\item with some abuse of notation, we set $\chi _{a}(x):=\chi _{\left\{a\right\} }(x)$,
\item $\partial _{i}=\frac{\partial }{\partial x_{i}}$ denotes the usual partial derivative, $D_{i}$ denotes the generalized derivative (see Section \ref{adu}),
\item $\int $ denotes the usual Lebesgue integral, $\oint $ denotes the pointwise integral (see Section \ref{adu}),
\item if $E$ is any set, then $|E|$ denotes its cardinality.
\end{itemize}

\subsection{Non-Archimedean fields\label{naf}}

Our approach to quantum mechanics makes multiple uses of the notions of infinite and infinitesimal numbers. A natural framework to introduce these numbers suitably is provided by non-Archimedean mathematics (see, e.g., \cite{el06}). This framework operates with the infinite and infinitesimal numbers as the elements of the new \textit{non-Archimedean field}.

\begin{definition}
Let $\mathbb{K}$ be an infinite ordered field\footnote{Without loss of generality, we assume that $\mathbb{Q}\subseteq\mathbb{K}$.}. An element $\xi \in \mathbb{K}$ is:

\begin{itemize}
\item infinitesimal if, for all positive $n\in \mathbb{N}$, $|\xi|<\frac{1}{n}$,
\item finite if there exists $n\in \mathbb{N}$ such that $|\xi |<n$,
\item infinite if, for all $n\in \mathbb{N}$, $|\xi |>n$ (equivalently, if $\xi $ is not finite).
\end{itemize}
We say that $\mathbb{K}$ is non-Archimedean if it contains an infinitesimal $\xi \neq 0$, and that $\mathbb{K}$ is superreal if it properly extends $\mathbb{R}$.
\end{definition}

Notice that, trivially, every superreal field is non-Archimedean. Infinitesimals allow introducing the following equivalence relation, which is fundamental in all non-Archimedean settings.
\begin{definition}
\label{def infinite closeness} We say that two numbers $\xi ,\zeta \in {\mathbb{K}}$ are infinitely close if $\xi -\zeta $ is infinitesimal. In this case we write $\xi \sim \zeta $.
\end{definition}

In the superreal case, $\sim$ can be used to introduce the fundamental notion of ``standard part''\footnote{For a proof of the following simple theorem, the interested reader can check, e.g., \cite{Goldblatt}.}.
\begin{theorem}
If $\mathbb{K}$ is a superreal field, every finite number $\xi \in \mathbb{K}$ is infinitely close to a unique real number $r\sim \xi $, called the the \textbf{standard part} of $\xi $.
\end{theorem}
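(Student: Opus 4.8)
The plan is to exploit the completeness of $\mathbb{R}$: a finite element of $\mathbb{K}$ naturally determines a Dedekind cut of the reals, whose cut point will be the desired standard part. Concretely, given a finite $\xi \in \mathbb{K}$, I would consider the set $A = \{ q \in \mathbb{R} : q < \xi \}$, where the inequality is read in the ambient ordered field $\mathbb{K} \supseteq \mathbb{R}$. Since $\xi$ is finite, I fix $n \in \mathbb{N}$ with $|\xi| < n$; then $-n \in A$ and $n$ is an upper bound for $A$, so $A$ is nonempty and bounded above. By the least upper bound property of $\mathbb{R}$, I set $r := \sup A \in \mathbb{R}$, and this $r$ is my candidate for the standard part.

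The core of the existence argument is to show $\xi \sim r$, i.e., that $\xi - r$ is infinitesimal, and I would do this by contradiction. If $\xi - r$ were not infinitesimal, there would be a positive $m \in \mathbb{N}$ with $|\xi - r| > 1/m$. In the case $\xi - r > 1/m$, the real number $r + 1/(2m)$ still lies below $\xi$, hence belongs to $A$, contradicting that $r$ is an upper bound for $A$. In the case $\xi - r < -1/m$, every $q \in A$ satisfies $q < \xi < r - 1/m$, so $r - 1/(2m)$ is an upper bound for $A$ strictly smaller than $r$, contradicting $r = \sup A$. Either way I reach a contradiction, so $\xi - r$ must be infinitesimal, which is exactly $\xi \sim r$.

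For uniqueness, I would suppose $r, s \in \mathbb{R}$ both satisfy $\xi \sim r$ and $\xi \sim s$, and write $r - s = (r - \xi) + (\xi - s)$ as a sum of two infinitesimals, which is again infinitesimal. Since $\mathbb{R}$ is Archimedean, its only infinitesimal element is $0$, forcing $r = s$. I do not expect any serious obstacle in this proof: the single delicate point is bookkeeping — keeping clear that the order used to define $A$ is that of $\mathbb{K}$ while the supremum is taken inside $\mathbb{R}$, and recognizing that completeness of $\mathbb{R}$ is precisely what guarantees the candidate $r$ exists. The entire argument therefore rests on just two ingredients: the least upper bound property and the Archimedean property of the reals.
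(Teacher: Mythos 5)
Your proof is correct and complete: the least-upper-bound construction of $r=\sup\{q\in\mathbb{R}: q<\xi\}$, the two-case contradiction showing $\xi-r$ is infinitesimal, and the uniqueness via the Archimedean property of $\mathbb{R}$ are all sound. The paper itself does not prove this theorem but defers to the literature (Goldblatt) in a footnote, and the standard argument given there is precisely this supremum construction, so your approach coincides with the canonical one.
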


Following the literature, we will always denote by $\operatorname{st}(\xi )$ the standard part of any finite number $\xi$. Moreover, with a small abuse of notation, we also put $\operatorname{st}(\xi ) = +\infty $ (resp. $\operatorname{st}(\xi )=-\infty $) if $\xi \in \mathbb{K}$ is a positive (resp. negative) infinite number.
\begin{definition}
\label{MG}Let $\mathbb{K}$ be a superreal field, and $\xi \in \mathbb{K}$ a number. The \label{def monad} monad of $\xi $ is the set of all numbers that are infinitely close to it,
\begin{equation*}
\mathfrak{m}\mathfrak{o}\mathfrak{n}(\xi) = \{\zeta \in \mathbb{K}:\xi \sim \zeta \}.
\end{equation*}
\end{definition}

Notice that, by definition, the set of infinitesimals is $\mathfrak{mon}(0)$ precisely. Finally, superreal fields can be easily complexified by considering
\begin{equation*}
\mathbb{K} + i\mathbb{K},
\end{equation*}
namely, a field of numbers of the form 
\begin{equation*}
a+ib,\ a,b\in \mathbb{K}.
\end{equation*}
In this way, the complexification of non-Archimedean fields shows no particular difficulty and is straightforward.

\bigskip

\section{The field of Euclidean numbers\label{lt}}

Nonstandard analysis plays one of the most prominent roles between various approaches to non-Archimedean mathematics. One reason is that nonstandard analysis provides a handy tool to study and model the problems which come from many different areas. However, the classical representations of nonstandard analysis can feel overwhelming sometimes, as they require a good knowledge of the objects and methods of mathematical logic. This stands in contrast to the actual use of nonstandard objects in the mathematical practice, which is almost always extremely close to the usual mathematical practice. 

For these reasons, we believe that it is worth to present nonstandard analysis avoiding most of the usual logic machinery. We will introduce the nonstandard approach to the formalism of quantum mechanics via the notion of $\Lambda$-limit\footnote{In \cite{fle} and \cite{BDNF2}, the reader can find several other approaches to nonstandard analysis and an analysis of them.} and the Euclidean numbers. These numbers are the underlying object of $\Lambda $-theory, and can be introduced via a purely algebraic approach, as done in \cite{benci99}\footnote{An elementary presentation of (part of) this theory can be found in \cite{Bmathesis} and \cite{bencilibro}.}. 

The basic idea of the construction of Euclidean numbers is the following: as the real numbers can be constructed by completion of the rational numbers using the Cauchy notion of limit, the Euclidean numbers can be constructed by completion of the reals with respect to a new notion of limit, called $\Lambda$-limit, that we are now going to introduce.

Let $\Lambda $ be an infinite set containing $\mathbb{R}$ and let $\mathfrak{L}$ be the family of finite subsets of $\Lambda$. A function $\varphi : \mathfrak{L}\rightarrow \mathbb{R}$ will be called \textit{net} (with values in $\mathbb{R}$). The set of such nets is denoted by $\mathfrak{F}\left( \mathfrak{L},\mathbb{R}\right)$ and equipped with the natural operations
\begin{eqnarray*}
\left( \varphi +\psi \right) (\lambda ) &=& \varphi(\lambda) + \psi(\lambda), \\
\left( \varphi \cdot \psi \right) (\lambda ) &=& \varphi(\lambda) \cdot \psi(\lambda),
\end{eqnarray*}
and the partial order relation
\begin{equation*}
\varphi \geq \psi \; \Longleftrightarrow \; \forall \lambda \in \mathfrak{L} ,\ \varphi(\lambda) \geq \psi(\lambda).
\end{equation*}
In this way, $\mathfrak{F}\left( \mathfrak{L},\mathbb{R}\right)$ is a partially ordered real algebra.

\begin{definition}
We say that a superreal field $\mathbb{E}$ is a field of Euclidean numbers if there is a surjective map 
\begin{equation*}
J:\mathfrak{F}\left( \mathfrak{L},\mathbb{R}\right) \rightarrow \mathbb{E},
\end{equation*}
which satisfies the following properties,
\begin{itemize}
\item $J\left( \varphi + \psi \right) = J \left(\varphi \right) + J\left(\psi\right)$,

\item $J\left( \varphi \cdot \psi \right) = J\left( \varphi \right) \cdot J\left( \psi \right)$,

\item if $\varphi (\lambda ) > r$, then $J\left( \varphi \right) > r$.
\end{itemize}
$J$ will be called the realization of $\mathbb{E}$.
\end{definition}

The proof of the existence of such a field is an easy consequence of the Krull -- Zorn theorem. It can be found, e.g., in \cite{benci99,ultra,milano,bencilibro}. In this paper, we also use the complexification of $\mathbb{E}$, denoted\footnote{The choice of the notation will become clear later on.} by
\begin{equation}
\mathbb{C}^{\ast} = \mathbb{E} + i\mathbb{E}.  \label{rita}
\end{equation}

\begin{definition} Let $\mathbb{E}$ be a field of Euclidean numbers, let $J$ be its realization and let $\varphi\in\mathfrak{F}\left(\mathfrak{L},\mathbb{R}\right)$. We say that $J\left(\varphi\right)$ is the \textbf{$\Lambda$-limit} of the net $\varphi$. We will also denote it by 
\begin{equation*}
J\left( \varphi \right) := \lim_{\lambda \uparrow \Lambda }\varphi (\lambda ).
\end{equation*}

\end{definition}

The name ``limit'' has been chosen because the operation
\begin{equation*}
\varphi \mapsto \lim_{\lambda \uparrow \Lambda }\varphi (\lambda )
\end{equation*}
satisfies the following properties,

\begin{itemize}
\item \textsf{(}$\Lambda $-\textsf{1)}\ \textbf{Existence.}\ \textit{Every net $\varphi :\mathfrak{L}\rightarrow \mathbb{R}$ has a unique limit $L\in \mathbb{E}$.}

\item ($\Lambda $-2)\ \textbf{Constant}. \textit{If $\varphi(\lambda)$ is eventually constant, namely, $\exists \lambda_0\in \mathfrak{L}$, $r\in \mathbb{R}$ such that $\forall \lambda \supset \lambda_0$, $\varphi (\lambda )=r$, then}
\begin{equation*}
\lim_{\lambda \uparrow \Lambda} \varphi(\lambda) = r.
\end{equation*}

\item ($\Lambda $-3)\ \textbf{Sum and product}.\ $\forall \varphi
,\psi :\mathfrak{L}\rightarrow \mathbb{R}$
\begin{alignat*}{5}
\nonumber &\lim\limits_{\lambda \uparrow \Lambda} \varphi(\lambda) &\hspace{1mm}+\hspace{0.5mm}& \lim\limits_{\lambda \uparrow \Lambda} \psi(\lambda) &\hspace{0.5mm}\;=\;& \lim\limits_{\lambda \uparrow \Lambda} (\varphi(\lambda) &\hspace{0.5mm}+\hspace{0.5mm}& \psi(\lambda)), \\
\nonumber &\lim\limits_{\lambda \uparrow \Lambda} \varphi(\lambda) &\cdot \hspace{1.5mm} & \lim\limits_{\lambda \uparrow \Lambda} \psi (\lambda) &\hspace{0.5mm}\;=\;&  \lim\limits_{\lambda \uparrow \Lambda} (\varphi(\lambda) &\cdot \hspace{1.5mm}& \psi(\lambda)).
\end{alignat*}
\end{itemize}

The Cauchy limit of a sequence (as formalized by Weierstra\ss) satisfies the second and the third\footnote{When both sequences have a limit.} condition above. The main difference between the Cauchy and the $\Lambda$-limit is given by the property $(\Lambda-1)$, namely, by the fact that the $\Lambda$-limit always exists. Notice that it implies that $\mathbb{E}$ must be larger than $\mathbb{R}$ since it must contain the limit of diverging nets as well. 

For those nets that have a Cauchy limit, the relationship between the Cauchy and the $\Lambda$-limit is expressed by the following identity,
\begin{equation}
\lim_{\lambda \rightarrow \Lambda }\varphi (\lambda )=\operatorname{st}\left( \lim_{\lambda
\uparrow \Lambda }\varphi (\lambda )\right).  \label{bn}
\end{equation}

With the introduction of the Euclidean numbers there arises a natural question: What do they look like? Let us give a few examples.
\begin{enumerate}
    \item Let $\varphi(\lambda):=\frac{1}{|\lambda|}$ for every $\lambda\neq\emptyset$, and $\varepsilon=\lim_{\lambda \uparrow \Lambda }\varphi (\lambda )$. Then $\varepsilon>0$, because $\varphi(\lambda)>0$ for every $\lambda\in\mathfrak{L}$. However, $\varepsilon<r$ for every positive $r\in\mathbb{R}$ since $\varphi(\lambda)<r$ eventually in $\lambda$. Therefore, $\varepsilon$ is a non-zero infinitesimal in $\mathbb{E}$.
    
    \item Analogously, let $\varphi(\lambda):=|\lambda|$ for every $\lambda\in\mathfrak{L}$. Let $\sigma=\lim_{\lambda \uparrow \Lambda }\varphi (\lambda )$. Straightforwardly, $\sigma$ is a positive infinite element in $\mathbb{E}$, and $\sigma\cdot\varepsilon=1$.
    
    \item Let us generalize the previous example. If $E\subset \mathbb{R}\subset \Lambda,$ we set
\begin{equation*}
\mathfrak{n}\left( E\right) =\lim_{\lambda \uparrow \Lambda }\ |E\cap \lambda |,
\end{equation*}
where $|F|$ denotes the number of elements of a finite set $F$. Since $\lambda$ is a finite set, $|E\cap\lambda|\in\mathbb{N}$ for every $\lambda$. If $E$ is finite, then $E$ belongs to $\mathfrak{L}$, and the net $|E\cap\lambda|$ is eventually constant (in $\lambda$) and equal to $|E|$, so that $\mathfrak{n}\left(E\right)=|E|$. On the other hand, if $E$ is an infinite set, the above limit gives an infinite number (called the \textit{numerosity} of $E$\footnote{The reader interested in the details and the developments of the theory of numerosities is referred to \cite{benci95b,BDN2003,BDNF1,BF}.}).
\end{enumerate}

\begin{definition}
A mathematical entity is called \textbf{internal} if it is a $\Lambda$-limit of some other entities.
\end{definition}

\subsection{Extension of functions, hyperfinite sets and grid functions\label{HE}}

The $\Lambda$-limit allows extending the field of real numbers to the field of Euclidean numbers. Similarly, we can use it to extend sets and functions in arbitrary dimensions,
\begin{itemize}
    \item if $N\in\mathbb{N}$, then, in order to extend $\mathbb{R}^{N}$ to $\mathbb{E}^{N}$, we proceed in a trivial way: if $\varphi(\lambda):=\left( \varphi1
_{1}(\lambda ),...,\varphi _{N}(\lambda )\right) \in \mathbb{R}^{N},$ we set
\begin{equation*}
\lim\limits_{\lambda \uparrow \Lambda }\varphi (\lambda ) = \left( \lim\limits_{\lambda \uparrow \Lambda }~\varphi _{1}(\lambda ),...,\lim_{\lambda \uparrow \Lambda}~\varphi _{N}(\lambda )\right) \in \mathbb{E}^{N},
\end{equation*}

\item if $A\subset \mathbb{R}^{N}$, we let the \textbf{natural extension} of $A$ to be 
\begin{equation*}
A^{\ast } = \left\{ \lim\limits_{\lambda \uparrow \Lambda }\varphi (\lambda )\ |\ \forall \lambda ,\ \varphi (\lambda )\in A\right\},
\end{equation*}

\item if
\begin{equation*}
f: A\rightarrow \mathbb{R},\ \ A\subset \mathbb{R}^{N},
\end{equation*}
we let the \textbf{natural extension} of $f$ to $A^{\ast }$ to be a function such that, for every $x=\lim_{\lambda \uparrow \Lambda } x_{\lambda }\in A^{\ast}$,
\begin{equation*}
f^{\ast }\left( \lim_{\lambda \uparrow \Lambda }\ x_{\lambda }\right) = \lim\limits_{\lambda \uparrow \Lambda }~f^{\ast }\left( x_{\lambda} \right),
\end{equation*}

\item the above case can be generalized, in order to define directly the functions between subsets of $\mathbb{E}^{N}$, if
\begin{equation*}
u_{\lambda }:A\rightarrow \mathbb{R},\ \ A\subset \mathbb{R}^{N}
\end{equation*}
is a net of functions, its $\Lambda $-limit 
\begin{equation*}
u = \lim\limits_{\lambda \uparrow \Lambda }~u_{\lambda }: A^{\ast} \rightarrow \mathbb{E}
\end{equation*}
is a function such that, for any $x=\lim_{\lambda \uparrow \Lambda }\
x_{\lambda }\in \mathbb{E}^{N}$,  
\begin{equation*}
u(x)=\lim\limits_{\lambda \uparrow \Lambda }~u_{\lambda }\left( x_{\lambda }\right),
\end{equation*}

\item finally, if $V$ is a function space, we let its natural extension to be 
\begin{equation*} V^{\ast}:=\left\{\lim\limits_{\lambda\uparrow\Lambda} u_{\lambda}\mid u_{\lambda}\in V\right\}.
\end{equation*}

\end{itemize}
Notice that if, for all $x\in \mathbb{R}^{N}$,
\begin{equation*}
v(x)=\lim\limits_{\lambda \rightarrow \Lambda }~u_{\lambda } \left( x \right)
\end{equation*}
by Eq.~(\ref{bn}), it follows that
\begin{equation*}
\forall x\in \mathbb{R}^{N},\ \ v(x) = \operatorname{st}\left[ u(x)\right].
\end{equation*}

Moreover, it is clear that $\mathbb{E}$ is the natural extension of $\mathbb{R}$, namely, $\mathbb{E} = \mathbb{R}^{\ast}$. Notice that this choice justifies also our notation (\ref{rita}) for the complexification of $\mathbb{E}$,
\begin{equation*}
\mathbb{C}^{\ast } = \mathbb{E} + i\mathbb{E} = \mathbb{R}^{\ast} + i\mathbb{R}
^{\ast }.
\end{equation*}

Now we introduce a fundamental notion for all our applications, the \textit{``hyperfinite'' set}.
\begin{definition}
We say that a set $F\subset \mathbb{E}$ is \textbf{hyperfinite} if there is a net $\left\{ F_{\lambda }\right\} _{\lambda \in \Lambda }$ of finite sets such that 
\begin{equation*}
F = \left\{ \lim_{\lambda \uparrow \Lambda }\ x_{\lambda }\ |\ x_{\lambda } \in F_{\lambda }\right\}.
\end{equation*}
\end{definition}

Hyperfinite sets play the role of finite sets in the non-Archimedean framework since they share many properties with finite sets. We will often use the following feature provided by the hyperfinite sets: it is possible to ``add'' the elements of a hyperfinite set of numbers. If $F$ is a hyperfinite set of numbers, the \textbf{hyperfinite sum} of the elements of $F$ is defined in the following way,
\begin{equation*}
\sum_{x\in F}x=\ \lim\limits_{\lambda \uparrow \Lambda }\sum_{x\in F_{\lambda }}x.
\end{equation*}

Let us illustrate the hyperfinite sets with a very simple example. Let $F_{\lambda} = \{n\in\mathbb{N}\mid n\leq |\lambda|\}$ for every $\lambda\in\mathfrak{L}$. Further, let $F = \left\{ \lim_{\lambda \uparrow \Lambda }\ x_{\lambda }\ |\ x_{\lambda }\in F_{\lambda }\right\}$. Then $F$ is hyperfinite and $F\subseteq\mathbb{N}^{\ast}$. Moreover, $n\in F$, for every $n\in\mathbb{N}$, as $n\in F_{\lambda}$, for every $\lambda$, so that $|\lambda|\geq n$. Therefore, we have constructed a hyperfinite subset of $\mathbb{N}^{\ast}$ which contains the infinite set\footnote{At first sight, it might seem absurd to have a set that extends $\mathbb{N}$ but still behaves like a finite set. This is one of the peculiar and key properties of hyperfinite sets.} $\mathbb{N}$.

The kind of hyperfinite sets that we will use are the so-called ``hyperfinite grids''.
\begin{definition}
A hyperfinite set $\Gamma $ such that $\mathbb{R}^{N}\subset \Gamma \subset \mathbb{E}^{N}$ is called \textbf{hyperfinite grid}.
\end{definition}

If $\{\Gamma _{\lambda }\}_{\lambda\in\mathfrak{L}}$ is a family of finite subsets of $\mathbb{R}^N$, which satisfies the property
\begin{equation*}
\mathbb{R}^{N}\cap \lambda \subset \Gamma _{\lambda },
\end{equation*}
then it is not difficult to prove that the set
\begin{equation*}
\Gamma =\left\{ \lim_{\lambda \uparrow \Lambda }\ x_{\lambda }\ |\
x_{\lambda }\in \Gamma _{\lambda }\right\}
\end{equation*}
is a hyperfinite grid.

Below $\Gamma$ denotes a hyperfinite grid extending $\mathbb{R}^N$, which \textit{is fixed once and forever}.

\begin{definition}\label{ggg}
The space of grid functions on $\Gamma$ is the family $\mathfrak{G}(\Gamma)$ of functions 
\begin{equation*}
u: \Gamma \rightarrow \mathbb{R}
\end{equation*}
such that, for every $x=\ \lim_{\lambda \uparrow \Lambda }x_{\lambda }\in \Gamma $,
\begin{equation*}
u(x)=\ \lim\limits_{\lambda \uparrow \Lambda }\ u_{\lambda }(x_{\lambda }).
\end{equation*}
\end{definition}

If $f\in \mathfrak{F}(\mathbb{R}^{N})$, and $x=\ \lim_{\lambda \uparrow \Lambda }x_{\lambda }\in \Gamma $, we set 
\begin{equation}
f^{\circ}(x):=\lim\limits_{\lambda \uparrow \Lambda }\ f(x_{\lambda }),
\label{lina}
\end{equation}
namely, $f^{\circ}$ is a restriction to $\Gamma$ of the natural extension $f^{\ast }$ which is defined on the whole $\mathbb{E}^{N}$.

It is easy to check that, for every $a\in \Gamma$, the characteristic function $\chi _{a}(x)$ of $\{a\}$ is a grid function. In perfect analogy with the classical finite case, every grid function can be represented by the following hyperfinite sum,
\begin{equation}
u(x)=\sum_{a\in \Gamma }u(a)\chi _{a}(x)  \label{pu},
\end{equation}
namely, $\left\{ \chi _{a}(x)\right\} _{a\in \Gamma }$ is a set of generators for $\mathfrak{G}(\mathbb{R}^{N})$. 

Moreover, from Definition \ref{ggg} it follows that actually $\left\{ \chi _{a}(x)\right\} _{a\in \Gamma }$ is a basis for the space of grid functions on $\Gamma$.

In general, if $E$ is a subset of $\mathbb{R}^{N}$ and $f$ is defined only on $E$, we set
\begin{equation*}
f^{\circ} (x)=\sum_{a\in \Gamma \cap E^{\ast }}f^{\ast }(a)\chi _{a}(x),
\end{equation*}
namely, we set it to be $0$ on every grid point that does not belong to the natural extension of its domain. For example, if $f(x)=\frac{1}{|x|}$, then $f^{\circ}(0)=0$.

\bigskip

\section{Ultrafunctions\label{Ultrafunctions}}

In this section, we introduce the key ingredient needed for the description of a quantum system, the space of ultrafunctions. An explicit technical construction of (several) ultrafunctions spaces has been provided in Ref.~\cite{ultra, belu2012, belu2013, milano, algebra, beyond, BLS} by various reformulations of nonstandard analysis. However, we prefer to pursue an axiomatic approach to underscore the key properties of ultrafunctions needed for our aims since such technicalities are not important in the applications of the ultrafunctions spaces. For an explicit construction, we refer to \cite{benci}. The basic idea one has to keep in mind is that, in order to build a space of ultrafunctions, we start with a classical space of functions $V(\Omega)$ and extend it to a space of grid functions $V^{\circ}(\Omega)$ with several ad hoc properties.

\subsection{Axiomatic definition of ultrafunctions\label{adu}}

In order to build a space of ultrafunctions that suites for an adequate description of a quantum system, we have to fix an appropriate function space $V(\Omega)$. In that way, we choose $V(\Omega) \supseteq \mathcal{C}^{0}(\Omega)$ to be a function space that includes infinitely differentiable real functions and is a subspace of the space of locally integrable functions (which includes real $p$-integrable functions with $p\geq 1$),
\begin{equation*}
\mathcal{C}^{0}(\Omega)\subset\mathcal{D}(\Omega)\subset V(\Omega) \subset \mathcal{L}_{loc}^{1}(\Omega).
\end{equation*}
In the further step, we build a family of all \textit{\textbf{finite}}-dimensional subspaces of the chosen function space $V(\Omega)$. We label this family by $\left\{ V_{\lambda }\right\} _{\lambda \in \mathfrak{L}}$, so that it has the following property,
\begin{equation*}
\forall\lambda_{1},\lambda_{2}\in\mathfrak{L} \;\;\;\; \lambda _{1}\subseteq \lambda _{2}\Rightarrow V_{\lambda _{1}}\subseteq
V_{\lambda _{2}}.
\end{equation*}
Such a family provides a net $\left\{ V_{\lambda }\right\} _{\lambda \in \mathfrak{L}}$ of the finite subspaces of $V(\Omega)$. Hence it allows us to perform a $\Lambda$-limit of this net resulting in the required hyperfinite function space, the space of ultrafunctions,
\begin{equation}
V^\circ(\Omega) := \lim\limits_{\lambda \uparrow \Lambda} V_\lambda.
\end{equation}
This leads to a clear definition of the space of ultrafunctions, which we equip with the axioms providing some properties needed in quantum mechanics.
\begin{definition} \label{def ultra}
A \textbf{space of ultrafunctions $V^{\circ}(\Omega)$} generated by $V(\Omega)$ and modelled on the family of its finite subspaces $\left\{ V_{\lambda}\right\} _{\lambda \in \mathfrak{L}}$ is a space of grid functions 
\begin{equation*}
u:\Gamma \rightarrow \mathbb{E}
\end{equation*}
equipped with an internal functional
\begin{equation*}
\oint :V^{\circ}(\Omega) \rightarrow \mathbb{E}
\end{equation*}
(called \textbf{pointwise integral}) and $N$ internal operators 
\begin{equation*}
D_{i}:V^{\circ}(\Omega) \rightarrow V^{\circ}(\Omega)
\end{equation*}
(called \textbf{generalized partial derivative}), which satisfy the axioms below.

\begin{axiom}
\label{1}$\forall u\in V^{\circ}(\Omega)$, there exists a net $u_{\lambda }$ such that $\forall \lambda\in\mathfrak{L} \;\; u_{\lambda }\in V_{\lambda }$, and 
\begin{equation*}
u = \lim_{\lambda \uparrow \Lambda } u_{\lambda }.
\end{equation*}
\end{axiom}

\begin{axiom}
\label{2} $\forall u = \lim_{\lambda \uparrow \Lambda }u_{\lambda }$, $u_{\lambda}\in V_{\lambda }$, its pointwise integral is defined as
\begin{equation}
\oint u(x)~dx \; = \; \lim_{\lambda \uparrow \Lambda }\int_{Q_{\lambda}}u_{\lambda }~dx  \label{int},
\end{equation}
where
\begin{equation*}
Q_{\lambda }=\left\{ x\in \mathbb{R}^{N}\ \Bigl|\ \left\vert x_{i}\right\vert \leq |\lambda |\right\}.
\end{equation*}
\end{axiom}

\begin{axiom}
\label{3}$\forall a\in \Gamma $, there exists a positive infinitesimal number $d(a)$ such that
\begin{equation*}
d(a) = \oint \chi _{a}(x)dx>0.
\end{equation*}
\end{axiom}

\begin{axiom}
\label{4}If $x= \lim_{\lambda\uparrow \Lambda }x_{\lambda }\in \Gamma $ and $u = \lim_{\lambda \uparrow \Lambda } u_\lambda$ such that $\forall \lambda\in\mathfrak{L} \;\; u_{\lambda } \in V_{\lambda }\cap C_{c}^{1}(\mathbb{R}^{N})$, then its generalized partial derivative at the point $x$ is defined as
\begin{equation}
D_{i}u(x)=\lim_{\lambda \uparrow \Lambda }\ \partial _{i}u_{\lambda
}(x_{\lambda }).  \label{der}
\end{equation}
\end{axiom}

\begin{axiom}
\label{5}If $D:  V^{\circ}(\Omega) \rightarrow \left( V^{\circ} (\Omega) \right) ^{N}$ is the generalized gradient, i.e.,
\begin{equation*}
Du:=( D_{1} u \; ... \; D_{N} u ),
\end{equation*}
then
\begin{equation*}
Du=0 \; \Leftrightarrow \; u = \operatorname{const}.
\end{equation*}
\end{axiom}

\begin{axiom}
If we set the support of an ultrafunction $u$ as 
\begin{equation*}
\mathfrak{supp}\left( u\right) =\left\{ x\in \Gamma \ |\ u(x)\neq 0\right\} ,
\end{equation*}
then
\begin{equation*}
\mathfrak{supp}\left( D_{i}\chi _{a}(x)\right) \subset \mathfrak{mon}(a).
\end{equation*}
\end{axiom}

\begin{axiom}
\label{7}For every $u,v\in V^{\circ}(\Omega)$, 
\begin{equation}
\oint (D_i u(x)) v(x)dx\; = \; -\oint u(x) (D_i v(x)) dx.  \label{parti}
\end{equation}
\end{axiom}

\end{definition}

Let us notice that, in most previous papers, ultrafunctions spaces were assumed to witness only some of the above axioms. In fact, constructing a space of ultrafunctions that witnesses all these axioms presents several technical challenges. However, these properties are fundamental to develop many applications better, as we are going to show. For more details on the construction of such a space, as well as for the relevance of these axioms, we refer to \cite{benci}.

\subsection{Discussion of the axioms of the space of ultrafunctions\label{aduB}}
\textbf{Axiom 1} means that every ultrafunction is a grid function (based on $V(\Omega)$), which has the following peculiar property: at every step $\lambda$ the corresponding finite-dimensional vector space which contains all $u_{\lambda}$ is $V_{\lambda}$. Hence, every ultrafunction is a $\Lambda$-limit of functions in $V_\lambda$.

\textbf{Axiom 2} is a definition of the pointwise integral. This integral extends the usual Riemann integral from functions in $\mathcal{C}_{c}^{0}(\mathbb{R}^{N})$ to ultrafunctions in $V^{\circ}(\Omega)$. In fact, by definition, $\forall f\in \mathcal{C}_{c}^{0}(\mathbb{R}^{N})$, 
\begin{equation*}
\oint f{^\circ}(x)~dx \; = \; \int f(x)~dx,
\end{equation*}
since for the functions of a compact support the net $\lim_{\lambda \uparrow \Lambda }\int_{Q_{\lambda}}u~dx$ becomes constantly equal to $\int u(x)dx$. Moreover, if $f\in L^{1} (\mathbb{R}^{N})\cap V$, then 
\begin{equation}
\nonumber \oint f(x)~dx \;\sim\; \int f(x)~dx,
\end{equation}
as the quantity 
\begin{equation}
\nonumber \Biggl|\int\limits_{Q_{\lambda}} f(x)~dx \;- \int\limits_{\mathbb{R}^{N}} f(x)~dx\Biggr|
\end{equation}
becomes arbitrarily small for large $\lambda$.

\textbf{Axiom 3} shows that the equality $\oint f^\circ(x)~dx = \int f(x)~dx$ cannot hold for all arbitrary Riemann integrable functions. This is the reason why the new notation $\oint$ has been introduced. A key example (being very important when dealing with singular potentials) is the characteristic function of a singleton. In fact, if $a\in\mathbb{R}^N\cap\Gamma$, then
\begin{eqnarray}
\nonumber \int \chi _{a}(x)dx &=& 0 \\
 &\neq& \oint \chi _{a}(x)dx \; = \;\varepsilon \; \sim \; 0,
\end{eqnarray}
with $\varepsilon>0$, so that it represents the ``weight'' of the point $a$. At the first sight, this Axiom might seem unnatural. However, we work in a non-Archimedean approach, thus, the infinitesimals cannot be forgotten as is done in the Riemann integration. 

\textbf{Axiom 4} shows that the generalized derivative extends the usual derivative. In fact, if $f \in C^{1}(\mathbb{R}^N)$ and $x\in \mathbb{R}^N$, then
\begin{equation*}
D_i f^{\circ}(x) = \lim_{\lambda \uparrow \Lambda}\ \partial_i f(x) = \partial_i f(x).
\end{equation*}

However, the less intuitive fact is that the operator $D_i$ is defined on all ultrafunctions. The last three axioms have been introduced to highlight that all the most useful properties of the usual derivative are also satisfied by $D_{i}$. In fact, \textbf{Axiom 5} says that the ultrafunctions behave as compactly supported $\mathcal{C}^{1}$ functions.

\textbf{Axiom 6} states that the derivative is a local operator\footnote{We do not want to enter too much into details here, as ``locality'' is all we need to develop the applications we have in mind. That said, using some basic tools of nonstandard analysis (underspill and saturation), it would be simple to prove that, actually, there exists an infinitesimal number $\eta$ such that, for every $a\in\Gamma$, the expansion of the derivative $D\chi_{a}(x)$ in the base $\{\chi_{b}\}_{b\in\Gamma}$ would involve only the points $b\in [a-\eta,a+\eta]$. Therefore, the second derivative of $\chi_{a}(x)$ would involve only the points $b\in [a-2\eta,a+2\eta]$ and, more in general, the $n$-th derivative would involve the points in $[a-n\eta,a+n\eta]$. This shows that, for every $n\in\mathbb{N}^{\ast}$ such that $n\eta\sim 0$, the operator $D^{n}$ would still be local. This includes some infinite $n$, but not all infinite $n$, e.g., let $n>\frac{1}{\eta}$. In particular, the infinite matrix $M_{D}$ that corresponds to the operator $D$ in the base $\{\chi_{b}\}_{b\in\Gamma}$ is close to be diagonal, in the sense that if we let $N=\max |\{[a-\eta]\cap\Gamma|,|[a+\eta]\cap\Gamma|\mid a\in\Gamma\}$, then in every row of $M_{D}$ the only non-zero elements are the $M_{n,m}$ with $m\in [n-N,n+N]$. This is similar to the computational approximations of the derivative.}, which is a fundamental fact in the applications of ultrafunctions to quantum mechanics.

\textbf{Axiom 7} provides a weak form of Leibniz rule, which is of primary importance in the theory of weak derivatives, distribution, calculus of variations, etc. We express this rule in its weak form since the Leibniz rule
\begin{equation*}
D(fg) = (Df) g + fDg
\end{equation*}
cannot be satisfied by every ultrafunction because of the Schwartz impossibility theorem (see \cite{Schwartz, algebra}). In fact, if Leibniz rule held for all ultrafunctions, we would construct a differential algebra extending $\mathcal{C}^{0}$, that is not possible. In some sense, ultrafunctions provide a ``solution'' of Schwartz impossibility theorem at the cost of using the weak formulation for Leibniz rule instead of the full one\footnote{The reader interested in a deeper discussion of this fact is referred to \cite{Schwartz}. Let us also mention that Colombeau functions provide an alternative ``solution'' of the Schwartz theorem, see \cite{col85}. In the Colombeau approach, Leibniz rule is preserved. However, we have to take $\mathcal{C}^{\infty}$ functions instead of $\mathcal{C}^{k}$ functions.}. 

\subsection{The structure of the space of ultrafunctions}

Since the space of ultrafunctions is generated by $\{\chi_{a}\}_{a\in\Gamma}$, it follows from \textbf{Axiom 3} that the ultrafunctions integral $\oint$ is actually a hyperfinite sum. In fact, for every $u\in V^{\circ}$, its integral can be expanded as
\begin{equation*}
\oint u(x)~dx = \sum_{a\in \Gamma }u(a)d(a).
\end{equation*}
In order to analyze the applications of the space of ultrafunctions to quantum mechanics we consider the complex-valued ultrafunctions, so that we provide a complexification of the space of ultrafunctions,
\begin{equation*}
\mathcal{H}^{\circ}:=V^{\circ}\oplus iV^{\circ}.
\end{equation*}
Coherently with this notation, we let
\begin{equation*}
\mathcal{H} := V \oplus iV
\end{equation*}
and
\begin{equation*}
\mathcal{H}_\lambda := V_\lambda \oplus iV_\lambda.
\end{equation*}
The pointwise integral allows to define the following sesquilinear form with values in $\mathbb{C}^{\ast}$ on $\mathcal{H}^{\circ}$,
\begin{equation}
\oint u(x)\overline{v(x)}~dx = \sum_{x\in\Gamma} u(x)\overline{v(x)}d(x),
\label{rina}
\end{equation}
where $\overline{z}$ is a complex conjugation of $z$. Due to \textbf{Axiom 3} this form is a scalar product. Moreover, if $u,v,u\cdot v\in V\cap L^{2}\cap\mathcal{C}^{0}$, then 
\begin{eqnarray}
\nonumber \int u(x)\cdot \overline{v(x)}~dx &\sim& \oint u(x)\overline{v(x)}~dx \\
&=&\sum_{x\in \Gamma }u(x)\overline{v(x)}d(x),
\end{eqnarray}
with $\sim$ substituted by equality when $u,v,u\cdot v\in L^{2}\cap\mathcal{C}^{0}_{c}\left(\mathbb{R}^{N}\right)$. This means that this sesquilinear form extends the usual $L^{2}$ scalar product.

The norm of an ultrafunction is given by 
\begin{equation*}
\left\Vert u\right\Vert = \left( \oint |u(x)|^2~dx\right) ^{\frac{1}{2}} = \left(\sum_{a\in\Gamma}|u(a)|^2 d(a)\right).
\end{equation*}

In the theory of ultrafunctions, the Dirac delta function has a simpler interpretation due to the pointwise integral. In fact, we can define the \textbf{delta ultrafunction} (called also the Dirac ultrafunction) as follows. For every $a\in \Gamma$,
\begin{equation*}
\delta_a (x) = \frac{\chi_a (x)}{d(a)}.
\end{equation*}
Our choice can be easily motivated, since, for every $u\in V^{\circ}$,
\begin{eqnarray}
\nonumber \oint \delta_a (x) u(x)~dx &=& \sum_{x\in \Gamma } u(x) \delta_a(x) d(x) \\
\nonumber &=& \sum\limits_{x\in \Gamma } u(x) \frac{\chi_a (x)}{d(a)} d(x) \\
&=& u(a).
\end{eqnarray}

In particular, if $u=f^{\ast}|_{\Gamma}$ for some $f \in \mathcal{D}(\Omega)$, this shows that the scalar product between $\delta_{a}$ and $u$ equals $f^{\ast}(a)$. In particular, if $a\in\mathbb{R}$ then one recovers the classical expected property of a delta function.

Moreover, as in the ultrafunctions framework delta functions are actual functions (and not functionals, like in distributions theory), we can perform on them all the classical operations that do not have sense in the standard analysis like, for example, $\delta^2(x)$.

As $\mathcal{D}\subseteq V$, this shows that the delta ultrafunction behaves as the delta distribution when tested against functions in $\mathcal{D}$. Moreover, delta functions are mutually orthogonal with respect to the scalar product (\ref{rina}). Hence, being normalized they provide an orthonormal basis, called \textbf{delta-basis}, given by
\begin{equation}
\left\{ \sqrt{\delta _{a}}\right\} _{a\in \Gamma }=\left\{ \frac{\chi
_{a}}{\sqrt{d(a)}}\right\} _{a\in \Gamma }. \label{db}
\end{equation}
Hence, every ultrafunction can also be expanded in the following way,
\begin{equation}
u(x)=\sum_{a\in \Gamma }\left( \oint u(\xi )\delta _{a}(\xi )d\xi \right)
\chi _{a}(x).  \label{eq:lella}
\end{equation}

The scalar product allows the following ultrafunctions version of the Riesz representation theorem.
\begin{proposition}
\label{dual}If 
\begin{equation*}
\Phi :\mathcal{H}\rightarrow \mathbb{C}
\end{equation*}
is a linear internal functional, then there exists $u_\Phi$ such that, for all $v = \lim_{\lambda \uparrow \Lambda } v_\lambda \in \mathcal{H}^\circ$,
\begin{equation*}
\oint u_\Phi v~dx = \lim_{\lambda \uparrow \Lambda}\ \Phi( v_\lambda),
\end{equation*}
and, for every $f\in V$,
\begin{equation*}
\oint u_\Phi f^\circ~dx = \Phi(f).
\end{equation*}
\end{proposition}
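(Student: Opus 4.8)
The plan is to reduce the statement to the classical Riesz representation theorem on a finite-dimensional inner-product space, applied uniformly in $\lambda$, and then to pass to the $\Lambda$-limit. Because $\Phi$ is internal, it is carried by a net of linear functionals $\Phi_\lambda:\mathcal{H}_\lambda\to\mathbb{C}$ on the finite-dimensional spaces $\mathcal{H}_\lambda=V_\lambda\oplus iV_\lambda$; since $\mathcal{H}=\bigcup_\lambda\mathcal{H}_\lambda$ and the $\mathcal{H}_\lambda$ are nested, one may simply take $\Phi_\lambda:=\Phi|_{\mathcal{H}_\lambda}$, so that $\Phi(v_\lambda)=\Phi_\lambda(v_\lambda)$ whenever $v_\lambda\in\mathcal{H}_\lambda$. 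Internality is precisely what guarantees that the resulting net of representatives (constructed next) admits a $\Lambda$-limit that is again an ultrafunction.

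First I would equip each $\mathcal{H}_\lambda$ with the Hermitian form $\langle a,b\rangle_\lambda:=\int_{Q_\lambda}a\,\overline{b}\,dx$, i.e., the level-$\lambda$ form whose $\Lambda$-limit is the scalar product \eqref{rina} by Axiom~\ref{2}. On a finite-dimensional complex inner-product space the classical Riesz theorem yields a unique $w_\lambda\in\mathcal{H}_\lambda$ with $\Phi_\lambda(a)=\int_{Q_\lambda}a\,\overline{w_\lambda}\,dx$ for all $a\in\mathcal{H}_\lambda$. Setting $u_{\Phi,\lambda}:=\overline{w_\lambda}\in\mathcal{H}_\lambda$ and using commutativity of the pointwise product gives $\Phi_\lambda(a)=\int_{Q_\lambda}u_{\Phi,\lambda}\,a\,dx$. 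I then define $u_\Phi:=\lim_{\lambda\uparrow\Lambda}u_{\Phi,\lambda}$, which lies in $\mathcal{H}^\circ$ by Axiom~\ref{1} and the existence property $(\Lambda\text{-}1)$.

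Both identities then follow by unwinding the $\Lambda$-limit. For $v=\lim_{\lambda\uparrow\Lambda}v_\lambda\in\mathcal{H}^\circ$, the product property $(\Lambda\text{-}3)$ identifies $u_\Phi v$ with the net $u_{\Phi,\lambda}v_\lambda$, so Axiom~\ref{2} gives $\oint u_\Phi v\,dx=\lim_{\lambda\uparrow\Lambda}\int_{Q_\lambda}u_{\Phi,\lambda}v_\lambda\,dx=\lim_{\lambda\uparrow\Lambda}\Phi(v_\lambda)$, which is the first claim. For the second, fix $f\in V$; by coherence of the family $\{V_\lambda\}$ there is $\lambda_0$ with $f\in V_\lambda$ for all $\lambda\supseteq\lambda_0$, so $f^\circ$ is represented by the net that is eventually constantly $f$. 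Applying the first identity with $v=f^\circ$ and invoking the constant property $(\Lambda\text{-}2)$ yields $\oint u_\Phi f^\circ\,dx=\lim_{\lambda\uparrow\Lambda}\Phi(f)=\Phi(f)$.

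The one genuinely delicate point, which I expect to be the main obstacle, is the applicability of the level-$\lambda$ Riesz theorem: the form $\langle\cdot,\cdot\rangle_\lambda$ must be nondegenerate on $V_\lambda$, equivalently no nonzero element of $V_\lambda$ may vanish almost everywhere on $Q_\lambda$. This need not hold for every single $\lambda$, but it holds for a $\Lambda$-large family of indices, since otherwise one could assemble a nonzero ultrafunction $w=\lim_{\lambda\uparrow\Lambda}w_\lambda$ with $\oint|w|^2\,dx=0$, contradicting the positive-definiteness of \eqref{rina} guaranteed by Axiom~\ref{3}. As the $\Lambda$-limit depends only on the behaviour of a net on $\Lambda$-large families, representing $\Phi_\lambda$ on precisely those levels where the form is nondegenerate is enough to define $u_\Phi$ and to run the computations above; the remaining levels may be filled in arbitrarily without affecting the limit.
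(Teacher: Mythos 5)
Your proposal is correct and follows essentially the same route as the paper: restrict $\Phi$ to each finite-dimensional $\mathcal{H}_{\lambda}$, apply the classical Riesz representation theorem there to obtain $u_{\lambda}\in\mathcal{H}_{\lambda}$, and set $u_{\Phi}=\lim_{\lambda\uparrow\Lambda}u_{\lambda}$. The only difference is one of detail: you additionally address the possible degeneracy of the level-$\lambda$ form and spell out the verification of the two stated identities, both of which the paper's two-line proof leaves implicit.
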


\begin{proof} If $v\in \mathcal{H}_{\lambda },$ then the map
\begin{equation*}
v\mapsto \Phi(v)
\end{equation*}
is a linear functional over $\mathcal{H}_{\lambda }$ and hence, since there exists $u_{\lambda }\in \mathcal{H}_{\lambda }$ such that $\forall v\in \mathcal{H}_{\lambda },$
\begin{equation*}
\int u_{\lambda }v~dx=\Phi \left( v\right).
\end{equation*}
If we set
\begin{equation*}
u_\Phi = \lim_{\lambda \uparrow \Lambda} u_\lambda,
\end{equation*}
the conclusion follows.
\end{proof}

As already stated, our goal is to apply the theory of ultrafunctions to quantum mechanics. In this way, we are interested in understanding the relationship between ultrafunctions and $L^{2}$-functions. Even though the scalar product in $V^{\circ}$ can be seen as an extension of the $L^{2}$-scalar product, it is still not clear whether $L^{2}$ functions can be embedded into $V^{\circ}$. The basic idea is to use Eq.~(\ref{lina}) to do such an association. However, this does not work since the $L^{2}$-functions are not defined pointwise. For this reason, we introduce the following definition, which uses a weak form of association.

\begin{definition}
\label{L2}Given a function $\psi \in L^{2}(\Omega)$, we denote by $\psi^{\circ}$ the unique ultrafunction such that, for every $v = \lim_{\lambda \uparrow \Lambda} v_\lambda(x)\in \mathcal{H}^\circ$, 
\begin{equation*}
\oint \psi^\circ v~dx = \lim_{\lambda \uparrow \Lambda} \int_{\Omega} \psi v_\lambda~dx.
\end{equation*}
\end{definition}
Proposition \ref{dual} ensures that the above definition is well posed, as the map
\begin{equation*}
\Phi :v\mapsto \int \psi v \; dx
\end{equation*}
is a functional on the space $\mathcal{H}^\circ$.

\subsection{Ultrafunctions and distributions}

Distributions can be easily embedded into ultrafunctions spaces by identifying them with equivalence classes~\cite{BLS}.

\begin{definition}
\label{DEfCorrespondenceDistrUltra}The space of \textbf{\emph{generalized distributions}} on $\Omega$ is defined as follows,
\[
\mathcal{D}^{\prime}(\Omega)=V^{\circ}(\Omega)/N,
\]
where 
\[
N=\left\{ \tau\in V^{\circ}(\Omega)\ |\ \forall\varphi\in\mathcal{D}(\Omega),\ \oint\tau\varphi\ dx\sim 0\right\} .
\]
\end{definition}
The equivalence class of an ultrafunction $u \in V^{\circ}(\Omega)$ is denoted by $\left[u\right]_{\mathcal{D}(\Omega)}$. It contains all ultrafunctions, whose action on the functions in $\mathcal{D}(\Omega)$ differs from the action of $u$ by at most an infinitesimal quantity. An obvious idea is to identify this action with the action of a distribution. However, it is not directly possible since there are ultrafunctions, whose action does not correspond to the action of any distribution. For example, if $u=\tau\delta_{0}$, where $\tau$ is an infinite number, then $\forall \varphi \in \mathcal{D}(\Omega) \;  \oint u\varphi^* \; dx= \tau \varphi(0)$, which is an infinite quantity, whenever $\varphi(0)$ is different from $0$. 

This issue can be overcome by considering the so-called ``bounded'' ultrafunctions.
\begin{definition}
Let $\left[u\right]_{\mathcal{D}(\Omega)}$ be a generalized distribution. We say that $\left[u\right]_{\mathcal{D}(\Omega)}$ is a bounded generalized distribution if $\forall \varphi \in \mathcal{D}(\Omega)$ the integral $\oint u\varphi^{*}~dx$ is finite. We will denote by $\mathcal{D}_B^\prime(\Omega)$ the set of bounded generalized distributions.
\end{definition}
In turn, the spaces of generalized distributions and bounded generalized distributions can be identified by an isomorphism, as shows the following theorem.
\begin{theorem}
\label{bello}There is a linear isomorphism 
\[
\Phi: \mathcal{D}_B^\prime(\Omega) \rightarrow \mathcal{D}^\prime(\Omega)
\]
such that, for every $\left[u\right]_{\mathcal{D}(\Omega)}\in\mathcal{D}_{B}^{\prime}(\Omega)$ and for every $\varphi\in\mathcal{D}(\Omega)$,
\[
\left\langle \Phi\Bigl(\left[u\right]_{\mathcal{D}(\Omega)}\Bigr), \varphi\right\rangle_{\mathcal{D}(\Omega)} = \operatorname{st}\left(\oint u \varphi^{\ast}~dx\right).
\]
\end{theorem}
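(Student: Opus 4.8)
The plan is to verify, in turn, that the assignment
\[
\langle \Phi([u]_{\mathcal{D}(\Omega)}),\varphi\rangle_{\mathcal{D}(\Omega)}:=\operatorname{st}\Bigl(\oint u\,\varphi^{\ast}\,dx\Bigr),\qquad\varphi\in\mathcal{D}(\Omega),
\]
is well posed, linear, injective, valued in $\mathcal{D}'(\Omega)$, and surjective. The easy half of the statement collects the first three properties. Finiteness of $\oint u\,\varphi^{\ast}\,dx$ for each $\varphi$ is exactly the hypothesis that $[u]_{\mathcal{D}(\Omega)}$ is bounded, so $\operatorname{st}$ is applied to a finite number; independence of the representative is immediate, since if $u-u'\in N$ then $\oint(u-u')\varphi^{\ast}\,dx\sim0$ for all $\varphi\in\mathcal{D}(\Omega)$ and the standard part is unchanged. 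Linearity of $\varphi\mapsto\langle\Phi([u]_{\mathcal{D}(\Omega)}),\varphi\rangle_{\mathcal{D}(\Omega)}$, and of $\Phi$ itself, follow from additivity of the pointwise integral and of $\operatorname{st}$ on finite numbers. Injectivity is built into the definitions: if $\Phi([u]_{\mathcal{D}(\Omega)})=0$ then $\oint u\,\varphi^{\ast}\,dx\sim0$ for every $\varphi\in\mathcal{D}(\Omega)$, which is precisely $u\in N$, i.e. $[u]_{\mathcal{D}(\Omega)}=0$.

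The substantive point is that $\Phi([u]_{\mathcal{D}(\Omega)})$ is a genuine Schwartz distribution, i.e. continuous on $\mathcal{D}(\Omega)$, and this is the step I expect to be the main obstacle. Writing $u=\lim_{\lambda\uparrow\Lambda}u_{\lambda}$ with $u_{\lambda}\in V_{\lambda}\subset\mathcal{L}^{1}_{loc}(\Omega)$ and expanding the pointwise integral as the hyperfinite sum (\ref{rina}), one has $\oint u\,\varphi^{\ast}\,dx=\lim_{\lambda\uparrow\Lambda}\langle T_{\lambda},\varphi\rangle$, where $T_{\lambda}$ is the finitely supported measure assigning mass $u_{\lambda}(x)\,d_{\lambda}(x)$ to each grid point $x\in\Gamma_{\lambda}$ (here $d(x)=\lim_{\lambda\uparrow\Lambda}d_{\lambda}(x)$). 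Each $T_{\lambda}$ is an honest distribution of order zero, and boundedness says that the net $\langle T_{\lambda},\varphi\rangle$ has a finite $\Lambda$-limit for every $\varphi$. To convert this ``pointwise along $\Lambda$'' finiteness into the seminorm estimate $|\langle\Phi([u]_{\mathcal{D}(\Omega)}),\varphi\rangle_{\mathcal{D}(\Omega)}|\le C_{K}\,\|\varphi\|_{\mathcal{C}^{m_{K}}}$ on each space $\mathcal{D}_{K}$ of test functions supported in a fixed compact $K$, I would run the uniform boundedness (gliding hump) argument proper to the barrelled space $\mathcal{D}(\Omega)$: were continuity to fail on some $\mathcal{D}_{K}$, one could select $\varphi_{m}\in\mathcal{D}_{K}$ with $\|\varphi_{m}\|_{\mathcal{C}^{m}}$ small but $|\langle\Phi([u]_{\mathcal{D}(\Omega)}),\varphi_{m}\rangle_{\mathcal{D}(\Omega)}|$ large, assemble a series $\psi=\sum_{m}c_{m}\varphi_{k_{m}}$ convergent in the Fr\'echet space $\mathcal{D}_{K}$, and conclude that $\oint u\,\psi^{\ast}\,dx$ is infinite, contradicting boundedness. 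The delicate part here is precisely the non-Archimedean bookkeeping: the finiteness furnished by the $\Lambda$-limit is finiteness along $\Lambda$ rather than along a tail, so the interchange of the infinite sum with $\lim_{\lambda\uparrow\Lambda}$ and $\operatorname{st}$ must be controlled through partial sums and tail estimates rather than by naive additivity.

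For surjectivity I would appeal directly to Proposition~\ref{dual}. Given $T\in\mathcal{D}'(\Omega)$, first extend it to a (not necessarily continuous) linear functional $\widetilde{T}$ on all of $V$, possible since $\mathcal{D}(\Omega)$ is a subspace of $V$ and only linearity is required, and then to $\mathcal{H}=V\oplus iV$; this $\widetilde{T}$ is internal in the required sense, being recovered as the $\Lambda$-limit of its restrictions to the finite-dimensional spaces $\mathcal{H}_{\lambda}$. Proposition~\ref{dual} then produces an ultrafunction $u$ with $\oint u\,f^{\circ}\,dx=\widetilde{T}(f)$ for every $f\in V$. Specialising to $f=\varphi\in\mathcal{D}(\Omega)\subset V$ and using $\varphi^{\ast}|_{\Gamma}=\varphi^{\circ}$ gives $\oint u\,\varphi^{\ast}\,dx=\widetilde{T}(\varphi)=\langle T,\varphi\rangle$, a real number. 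In particular $\oint u\,\varphi^{\ast}\,dx$ is finite for every $\varphi$, so $[u]_{\mathcal{D}(\Omega)}$ is a bounded generalized distribution, and its standard part reproduces $T$ exactly, whence $\Phi([u]_{\mathcal{D}(\Omega)})=T$. This settles surjectivity and, combined with the injectivity and linearity already noted, shows that $\Phi$ is the desired linear isomorphism once continuity has been established.

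In summary, all of the structural properties reduce to the defining features of $N$, of the pointwise integral, and of the $\Lambda$-limit, together with Proposition~\ref{dual}; the only genuinely hard step is showing that the standard part of the internal pairing is automatically continuous on $\mathcal{D}(\Omega)$, which requires importing the uniform boundedness principle for barrelled spaces into the non-Archimedean framework and carefully commuting limits, standard parts, and a convergent series of test functions.
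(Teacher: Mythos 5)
The paper does not actually prove this theorem --- its ``proof'' is a citation to \cite{algebra} --- so your proposal cannot be matched line by line against an in-text argument; it has to be judged on its own terms. The routine parts are fine: well-posedness of the pairing, linearity, injectivity (standard part zero of a finite number means infinitesimal, hence membership in $N$), and the surjectivity argument, in which you extend a given $T\in\mathcal{D}^{\prime}(\Omega)$ linearly to $\mathcal{H}$ and invoke Proposition~\ref{dual} to produce a representing ultrafunction whose pairing with each $\varphi\in\mathcal{D}(\Omega)$ is exactly the real number $\langle T,\varphi\rangle$. That reading of Proposition~\ref{dual} is consistent with how the paper states and uses it.

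The genuine gap is exactly where you yourself place the weight: the proof that $\varphi\mapsto\operatorname{st}\bigl(\oint u\,\varphi^{\ast}\,dx\bigr)$ is continuous on $\mathcal{D}(\Omega)$. Your gliding-hump argument does not close. Write $T(\varphi)=\oint u\,\varphi^{\ast}\,dx$; the contradiction you want is that $T(\psi^{\ast})$ is infinite for $\psi=\sum_{m}c_{m}\varphi_{k_{m}}$. But $T$ is only internally (hyperfinitely) additive, so all you can assert is that for every standard $M$ one has $T(\psi^{\ast})=\sum_{m\le M}c_{m}T(\varphi_{k_{m}}^{\ast})+T(\rho_{M}^{\ast})$ with $\rho_{M}=\sum_{m>M}c_{m}\varphi_{k_{m}}$. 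The tail $\rho_{M}$ is itself a standard test function, so $T(\rho_{M}^{\ast})$ is finite by the boundedness hypothesis --- but nothing bounds it uniformly in $M$, and it is free to cancel the head term. This is precisely the mechanism by which everywhere-finite yet discontinuous linear functionals exist on the Fr\'echet space $\mathcal{D}_{K}$ (Hamel-basis constructions), so pointwise finiteness plus bare linearity cannot yield continuity; the internality of $T$ must enter essentially. One would need, for instance, to extend the chosen sequence of test functions to an internal hypersequence and apply overspill/saturation to manufacture a single internal witness of unboundedness, or to exploit that $T$ is the hyperfinite sum $\sum_{a\in\Gamma}u(a)\varphi^{\ast}(a)d(a)$ --- an internal measure --- and run a Loeb-measure/Riesz-type argument on each compact set. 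You flag this as ``delicate bookkeeping,'' but it is the entire content of the hard direction of the theorem, and as written the step would fail.
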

\begin{proof}
For the proof see, e.g., \cite{algebra}.
\end{proof}
In this way, any equivalence class $\left[u\right]_{\mathcal{D}(\Omega)} \in \mathcal{D}_{B}^{\prime}(\Omega)$ can be substituted by $\Phi \left(\left[u\right]_{\mathcal{D}}\right) \in \mathcal{D}^\prime(\Omega)$, so that we can define its action on the functions in $\mathcal{D}(\Omega)$,
\begin{eqnarray}
\nonumber \left\langle \left[u\right]_{\mathcal{D}(\Omega)},\varphi\right\rangle _{\mathcal{D}(\Omega)}&:=& \Bigl\langle\Phi([u]_{\mathcal{D}(\Omega)}),\varphi\Bigr\rangle_{\mathcal{D}(\Omega)} \\
\nonumber &=& \operatorname{st} \left(\oint u \varphi^{\ast}~dx\right).
\end{eqnarray}
In particular, if $f\in C_{c}^{0}(\Omega)$ and $f^{\ast}\in\left[u\right]_{\mathcal{D}(\Omega)}$, then, for all $\varphi\in\mathcal{D}(\Omega)$,
\begin{eqnarray}
\nonumber \left\langle \left[u\right]_{\mathcal{D}(\Omega)},\varphi\right\rangle _{\mathcal{D}(\Omega)} &=& \operatorname{st} \left(\oint u\,\varphi^{\ast}~dx\right) \\ 
\nonumber &=& \operatorname{st} \left(\oint f^{\ast}\varphi^{\ast}~dx\right) = \int f\varphi~dx.
\end{eqnarray}

\subsection{Self-adjoint operators on ultrafunctions}

Let an operator
\begin{equation*}
L: \mathcal{H}^\circ \rightarrow \mathcal{H}^\circ
\end{equation*}
be an internal linear operator on the space of ultrafunctions, which is a hyperfinite-dimensional space by definition. In this way, $L$ can be represented by a hyperfinite matrix (viewed as an infinite matrix by the standard analysis), because we can build a $\Lambda$-limit
\begin{equation*}
Lu := \lim_{\lambda \uparrow \Lambda }\ L_{\lambda }u_{\lambda },
\end{equation*}
where the operators
\begin{equation*}
L_{\lambda }:\mathcal{H}_{\lambda }\rightarrow \mathcal{H}_{\lambda }
\end{equation*}
can be represented by finite matrices (as every space $V_{\lambda }$ is finite-dimensional). In particular, if $L$ is a self-adjoint operator,
\begin{equation*}
\oint Lu\overline{v}~dx=\oint u\overline{Lv}~dx,
\end{equation*}
then the matrices $L_{\lambda }$ are Hermitian. Hence, $L$ can be represented by a hyperfinite-dimensional Hermitian matrix. Therefore, the spectrum $\sigma (L)$ of $L$ consists of eigenvalues only, more precisely,
\begin{equation*}
\sigma (L)=\left\{ \lim_{\lambda \uparrow \Lambda }\ \mu _{\lambda }\in 
\mathbb{E}\ |\ \forall \lambda ,\mu _{\lambda }\in \sigma (L_{\lambda
})\right\},
\end{equation*}
and its corresponding normalized eigenfunctions (being $\Lambda$-limits of the corresponding eigenfunctions of $L_\lambda$) form an orthonormal basis of $\mathcal{H}^{\circ}$. In that way, the ultrafunctions approach resembles the finite-dimensional vector spaces approach, in the sense that the distinction between self-adjoint operators and Hermitian operators is not needed. We have proven the following

\begin{theorem}\label{herm} Every Hermitian operator $L:\mathcal{H}^{\circ}\rightarrow \mathcal{H}^{\circ}$ is self-adjoint. \end{theorem}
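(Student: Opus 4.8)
The plan is to reduce the statement to the elementary spectral theorem for finite-dimensional Hermitian matrices and then transfer it through the $\Lambda$-limit. The key observation is that, because $\mathcal{H}^{\circ}$ is hyperfinite-dimensional, every internal linear operator $L$ is everywhere defined and coincides with a hyperfinite matrix; consequently its adjoint $L^{*}$ (the conjugate transpose) is likewise everywhere defined, so the domain subtleties that distinguish merely symmetric operators from genuinely self-adjoint ones in the standard Hilbert-space theory simply do not arise. Thus the entire content reduces to showing that the Hermitian symmetry $\oint Lu\,\overline{v}\,dx=\oint u\,\overline{Lv}\,dx$ forces $L=L^{*}$ as internal operators.

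First I would invoke internality to write $L=\lim_{\lambda\uparrow\Lambda}L_{\lambda}$ with each $L_{\lambda}:\mathcal{H}_{\lambda}\to\mathcal{H}_{\lambda}$ a genuine finite-dimensional linear map, representable by a finite matrix since each $\mathcal{H}_{\lambda}$ is finite-dimensional. Next I would translate the Hermitian hypothesis level by level: using Axiom~\ref{2}, the pointwise scalar product is the $\Lambda$-limit of the ordinary integrals $\int_{Q_{\lambda}}u_{\lambda}\overline{v_{\lambda}}\,dx$, so the identity $\oint Lu\,\overline{v}\,dx=\oint u\,\overline{Lv}\,dx$, holding for all ultrafunctions $u,v$, is the $\Lambda$-limit of the statements ``$L_{\lambda}$ is Hermitian on $\mathcal{H}_{\lambda}$ with respect to its inner product.'' By the algebraic properties of the limit, this yields that $L_{\lambda}$ is Hermitian for (eventually) every $\lambda$.

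With each $L_{\lambda}$ a Hermitian matrix on a finite-dimensional complex inner-product space, the classical spectral theorem applies: $L_{\lambda}$ has real eigenvalues and an orthonormal basis of eigenvectors, and in particular $L_{\lambda}=L_{\lambda}^{*}$. Taking $\Lambda$-limits of these eigenvalues and eigenvectors produces, exactly as in the discussion preceding the statement, the spectrum $\sigma(L)=\{\lim_{\lambda\uparrow\Lambda}\mu_{\lambda}\mid\mu_{\lambda}\in\sigma(L_{\lambda})\}$ together with an orthonormal eigenbasis of $\mathcal{H}^{\circ}$; passing to the limit of the identities $L_{\lambda}=L_{\lambda}^{*}$ gives $L=L^{*}$. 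Since both operators are internal and defined on all of the hyperfinite-dimensional space $\mathcal{H}^{\circ}$, their domains coincide trivially, and therefore $L$ is self-adjoint.

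The step I expect to require the most care is the level-by-level transfer of the Hermitian condition. One must argue that the symmetry of $L$, which a priori is only a single identity in $\mathbb{C}^{\ast}$ for each pair $u,v$, actually descends to Hermiticity of the finite matrices $L_{\lambda}$ and is not merely an artifact of the limit; this is where the internal character of $L$ and the algebraic properties of the $\Lambda$-limit, rather than any analytic estimate, do the work. Once this is in place, the remaining steps are purely finite-dimensional linear algebra lifted coordinate-wise through the limit, and the collapse of the symmetric/self-adjoint distinction is a direct consequence of working in hyperfinite dimension.
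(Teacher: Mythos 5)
Your proposal is correct and follows essentially the same route as the paper: represent the internal operator as a $\Lambda$-limit of finite matrices $L_{\lambda}$ on the finite-dimensional spaces $\mathcal{H}_{\lambda}$, descend the Hermitian condition to each level, and invoke finite-dimensional linear algebra to conclude that no domain distinction between symmetric and self-adjoint operators can arise. The only difference is that you explicitly flag the level-by-level transfer of Hermiticity as the delicate step, whereas the paper simply asserts it in the discussion preceding the theorem.
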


In \cite{benci}, it is possible to find a detailed analysis of the ultrafunctions formalization of the position and the momentum operators. For our applications to the Schr\"{o}dinger equation, let us notice that the Laplacian operator
\begin{equation*}
\Delta :C^{2}\left( \mathbb{R}^{N}\right) \rightarrow C^{0}\left( \mathbb{R}^{N}\right)
\end{equation*}
has the following expression as its ultrafunctions formulation,
\begin{equation*}
D^{2}=\sum_{j=1}^{N}D_{j}^{2}:V^{\circ}\rightarrow V^{\circ}.
\end{equation*}

For applications to quantum mechanics, the following Hamiltonian operator is fundamental,
\begin{equation}
\hat{H}u(x)=-\frac{1}{2}\Delta u(x)+\mathbf{V}(x)u(x), \label{ham}
\end{equation}
where we assume that the mass of the $i$-th particle $m_{i}=1$ and $\hbar=1$.

There is a deep and important difference between the standard and the ultrafunctions approach to the study of $\hat{H}$. In the classical $L^{2}$-theory, a fundamental problem is a choice of an appropriate potential $\mathbf{V}$ such that (\ref{ham}) makes sense. Moreover, it is fundamental to define an appropriate self-adjoint realization of $\hat{H}$. On the other hand, in the theory of ultrafunctions any internal function $\mathbf{V}:\Gamma \rightarrow \mathbb{E}$ provides a self-adjoint operator on $\mathcal{H}^{\circ}$, given by 
\begin{equation}
\hat{H}^{\circ}u(x):=-\frac{1}{2}D^{2}u(x)+\mathbf{V}(x)u(x),  \label{ham+}
\end{equation}
which always has a discrete (in the $\mathbb{R}^{\ast}$ sense) spectrum that consists of eigenvalues only. Of course, this spectrum will be hyperfinite, of cardinality equal to the cardinality of $\Gamma$ (once the multiplicities of the eigenvalues are taken into account). For these reasons, we believe that the ultrafunctions approach allows a much simpler study of ``very singular potentials'' such as
\begin{eqnarray}
\mathbf{V}(x) &=&\tau \delta _{a}(x),\ \ \tau\in \mathbb{E},  \label{bibi} \\
\mathbf{V}(x) &=&\Omega \chi _{E}(x),\ \ \Omega\in \mathbb{E},  \label{bobo}
\end{eqnarray}
where $\tau$ and $\Omega$ might be infinite numbers. The use of non-Archimedean methods is fundamental to give a reasonable model of these potentials, which have an interesting physical meaning. Particularly, the delta function potential~(\ref{bibi}), which represents a very short-ranged interaction, appears in many physical problems. For example, it can serve as a reasonable model for the interactions between atoms and electromagnetic fields (particularly, dipole-dipole interactions) and interatomic potential in a many-body system (particularly, Bose -- Einstein condensate). In this way, in the following, we will take a look at the quantum system with a delta potential within the ultrafunctions framework. Before to start, we will reformulate the basis of quantum mechanics --- its system of axioms --- through the ultrafunctions.

\bigskip

\section{Ultrafunctions and quantum mechanics\label{QM}}

\subsection{Axioms of QM based on ultrafunctions}

In the following table, we provide a set of axioms of quantum mechanics formulated within the ultrafunctions approach and compare it with the standard axioms\footnote{For the sake of simplicity, we focus on the system of axioms based on pure states of the quantum system. However, it can be straightforwardly generalized to the case of mixed states, which describe a statistical mixture of the pure states.}.

\renewcommand{\arraystretch}{2.0}
\setlength{\tabcolsep}{12pt}
\begin{widetext}
\begin{center}
\begin{tabular}{ p{1.75cm} | p{6.5cm}  p{6.5cm} }
   & \textbf{Standard QM} & \textbf{Ultrafunctions QM} \\
  \hline            
  \textbf{Axiom 1} & A physical state of a quantum system is described by a unit vector $|\psi\rangle$ in a complex Hilbert space $\mathcal{H}$. & A physical state of a quantum system is described by a \textbf{unit complex-valued ultrafunction} $\psi \in \mathcal{H}^\circ$. \\
  \hline            
  \textbf{Axiom 2} & A physical observable $A$ is represented by a linear self-adjoint operator $\hat{A}$ acting in $\mathcal{H}$. & A physical observable $A$ is represented by a linear Hermitian operator $\hat{A}$ acting in $\mathcal{H}^\circ$. \\
  \hline            
  \textbf{Axiom 3} & The only possible outcomes of a measurement of an observable $A$ form a set $\{ \mu_j\}$, where $\mu_j \in \sigma(\hat{A})$ are the (generalized) eigenvalues of $\hat{A}$. & The only possible outcomes of a measurement of an observable $A$ form a set $\{ \operatorname{st}(\mu_j)\}$, where $\mu_j \in \sigma(A)$ are the eigenvalues of the operator $\hat{A}$. \\
  \hline            
  \textbf{Axiom 4} & An outcome $\mu_j$ of a measurement of an observable $A$ can be obtained with a probability \begin{equation}\nonumber P_j = |\langle\psi_j |\psi\rangle|^2,\end{equation} where $|\psi_j\rangle$ is the (generalized) eigenstate associated with the observed (generalized) eigenvalue $\mu_j$. After the measurement, the quantum system is left in the state $|\psi_j\rangle$. & An outcome $\operatorname{st}(\mu_j)$ of a measurement of an observable $A$ can be obtained with a probability \begin{equation}\nonumber P_j = |\langle \psi, \psi_j \rangle|^2,\end{equation} where $\psi_j$ is the eigenstate associated with the observed eigenvalue $\mu_j$. After the measurement, the quantum system is left in the state $\psi_j$. \\
  \hline  
\end{tabular}

\begin{tabular}{ p{1.75cm} | p{6.5cm}  p{6.5cm} }
    \textbf{Axiom 5} & The time evolution of the state of the quantum system is described by the Schr\"{o}dinger equation  \begin{equation}\nonumber i\frac{\partial |\psi \rangle }{\partial t}=\hat{H}|\psi\rangle,\end{equation} where $\hat{H}$ is the Hamiltonian of the system. & The time evolution of the state of the quantum system is described by the Schr\"{o}dinger equation  \begin{equation}\nonumber i\frac{\partial \psi }{\partial t}=\hat{H}^\circ\psi,\end{equation} where $\hat{H}^{\circ}$ is the Hamiltonian of the system. \\
    \hline 
    \textbf{Axiom 6} & --- & In a laboratory only the states associated to a \textit{finite} expectation value of the physically relevant quantities can be realized. These states are called \textbf{physical states}, the rest of the states is called \textbf{ideal states}.
\end{tabular}
\end{center}
\end{widetext}

\subsection{Discussion of the axioms}
\subsubsection{Axiom 1: States}
In the standard formalism, a physical system is described by a unit vector in Hilbert space. The ultrafunctions formulation of \textbf{Axiom 1} guarantees that we use the ultrafunctions space $\mathcal{H}^\circ$ (hence, non-Archimedean mathematics) instead of the Hilbert one $\mathcal{H}$. In particular, working within wave functions, the state $|\psi\rangle$ can be represented by a normalized function $\psi \in L^{2}(\Omega ),\ \Omega \subset \mathbb{R}^{N}$. Since $L^2$ can be embedded in $V^\circ$, there exists a canonical embedding due to the Def.~\ref{L2},
\begin{equation*}
{{}^\circ}: \mathcal{H}\rightarrow \mathcal{H}^\circ,
\end{equation*}
given by
\begin{equation*}
\psi \mapsto \psi^\circ.
\end{equation*}
Since $\mathcal{H}^\circ$ is a space much richer than $\mathcal{H}$, the ultrafunctions framework recovers all standard states and provides more possible states, particularly, the ideal states of \textbf{Axiom 6}.

\subsubsection{Axiom 2: Observables}
The standard and ultrafunctions formulations of \textbf{Axiom 2} highlight many similarities as well as some differences. The main difference is the fact that the ultrafunctions formalism needs von Neumann's notion of the self-adjoint operator no more. In the standard formalism, it is not enough for an operator $\hat{A}$ to be Hermitian -- it has to be self-adjoint ($\hat{A} = \hat{A}^{\dagger}$) so that it can represent a physical observable\footnote{For example, if we consider the system of a particle in an infinite potential well, the momentum operator $\hat{P} = -i\frac{d}{dx}$ is Hermitian but not self-adjoint: the domain of $\hat{P}$ is only a subset of the domain of its adjoint. This fact leads to problems with a physical interpretation of the spectrum of an Hermitian but not self-adjoint operator, which can turn out to be the so-called residual spectrum (being not interpretable physically). The reader interested in a deeper discussion of the difference between Hermitian and self-adjoint operators, and its consequences in the standard quantum mechanics is referred to~\cite{Gieres2000}.}.

In the ultrafunctions formalism, the observables of a quantum system can be represented by \textit{internal Hermitian operators}, which are trivially self-adjoint due to the Theorem~\ref{herm}. Hence, for observables, ``Hermitian'' and ``self-adjoint'' become equivalent.

\subsubsection{Axioms 3, 4: Measurement}
In the standard formalism, the possible measurement outcomes of an observable $A$ belong to the spectrum of the corresponding self-adjoint operator $\hat{A}$, which, generally speaking, is decomposed into two sets, discrete spectrum, and continuous spectrum. While the discrete spectrum contains the eigenvalues of $\hat{A}$, the continuous one is a set of the so-called generalized eigenvalues, which correspond to the generalized eigenstates, the eigenstates which do not belong to $\mathcal{H}$. A typical example is given by the position operator $\hat{q}$ on the Hilbert space $L^2(\mathbb{R})$, whose spectrum is purely continuous (the whole real line $\mathbb{R}$), and the corresponding eigenfunctions $\psi_{q}(x) = \delta(x - q)$ do not belong to $L^2(\mathbb{R})$. In that way, the manner the standard formalism treats the discrete spectrum in does not fit for the continuous spectrum leading to numerous misunderstandings: an introduction of some additional constructions such as rigged Hilbert spaces is needed~\cite{Gieres2000}.

In the ultrafunctions formalism, due to the self-adjointness of internal Hermitian operators, it follows that any observable has exactly $\kappa =\dim ^{\ast }(\mathcal{H}^\circ) = |\Gamma|$ eigenvalues (taking into account their multiplicity). In this way, \textit{no more essential distinction between eigenvalues and a continuous spectrum is needed} since a continuous spectrum can be considered as a discrete spectrum containing eigenvalues infinitely close to each other.

For example, if we consider the eigenvalues of the position operator $\mathbf{q}$ of a free particle, then the eigenfunction relative to the eigenvalue $q\in \mathbb{R}$ is the Dirac ultrafunction $\delta _{q}$. Therefore, it can be trivially seen that its spectrum is $\Gamma$, which is not a continuous spectrum in $\mathbb{R}^*$. However, the standard continuous spectrum can be recovered since the eigenvalues of an internal Hermitian operator $\hat{A}$ are Euclidean numbers. Hence, assuming that measurement gives a real number, we have imposed in the \textbf{Axiom 3} that its outcome is $\operatorname{st}(\mu)$. In the case of the spectrum of the position operator, we can show that
\begin{equation*} 
\{\operatorname{st}(\mu)\mid \mu\in\sigma(\hat{q})\} = \mathbb{R},
\end{equation*}
because every real number lies in $\Gamma$.

Working in a non-Archimedean framework, we have set in a natural way that the transition probabilities should be non-Archimedean. In fact, we assume that the probability is better described by the Euclidean number $\left\vert \langle \psi ,\psi _{j} \rangle \right\vert ^{2}$, rather than the real number $\operatorname{st}(\left\vert \langle \psi ,\psi _{j} \rangle \right\vert ^{2})$. For example, let $\psi \in \mathcal{H}^\circ$ be the state of a system. The probability of an observation of the particle in a position $q$ is given by
\begin{equation*}
\left\vert \doint \psi (x)\delta _{q}(x)\sqrt{d(q)}~dx\right\vert^2  = \left\vert \psi (q)\right\vert^2 d(q),
\end{equation*}
which is an infinitesimal number. The standard probability can be recovered by the means of the standard part, in the case of the position operator it would be zero (as is expected). We refer to \cite{BHW,BHW16} for a presentation and discussion of the non-Archimedean probability.

\subsubsection{Axiom 5: Evolution}

The ultrafunctions version of this axiom is very similar to the standard one. Since $\hat{H}^\circ$ is an internal operator defined on a hyperfinite-dimensional vector space $\mathcal{H}^\circ$, it can be represented by \textit{an Hermitian hyperfinite matrix} due to the Theorem~\ref{herm}. Hence, the evolution operator of the quantum system is described by the exponential matrix $\hat{U}^\circ (t) = e^{-i\hat{H}^\circ t}$.

\subsubsection{Axiom 6: Physical and ideal states}

This is the most peculiar axiom in the ultrafunctions approach. In the ultrafunctions theory, the mathematical distinction between the physical eigenstates and the ideal eigenstates is intrinsic. It does not correspond to anything in the standard formalism. Hence, it opens a very interesting problem of the physical relevance of such states. Basically, we can intuitively say that the physical states correspond to the states which can be prepared and measured in a laboratory, while the ideal states represent ``extreme'' states useful in the foundations of quantum mechanics, thought experiments (Gedankenexperimente) and computations.

For example, a Dirac ultrafunction is not a physical state but an ideal state. It represents a state in which the position of the particle is perfectly determined, which is precisely what one has in mind considering the Dirac delta distribution. Clearly, this state cannot be produced in a laboratory since it requires infinite energy. However, it is useful in our description of the physical world since such a state makes more explicit the standard approach. Therefore, \textbf{Axiom 6} highlights a concept that is already present (but somehow hidden) in the standard approach.

For example, in the Schr\"{o}dinger representation of a free particle in $\mathbb{R}^{3}$, consider the state 
\begin{equation*}
\psi (x)=\frac{\varphi(x)}{|x|},\ \varphi \in \mathcal{D}(\mathbb{R}^{3}),\ \varphi(0) > 0.
\end{equation*}
We see that $\psi (x)\in L^{2}(\mathbb{R}^{3})$ but this state cannot be produced in a laboratory, since the expected value of its energy
\begin{equation*}
\langle \hat{H}\psi, \psi \rangle = \frac{1}{2}\int \left\vert \nabla \psi \right\vert ^{2}~dx
\end{equation*}
is infinite (even if the result of a single experiment is a finite number). 

\section{Example: Hamiltonian with a $\delta$-potential\label{deltaP}}

To compare the standard and the ultrafunctions approaches, in this section, we want to study the Schr\"{o}dinger equation
\begin{equation}\label{thisone}
\hat{H}\psi \equiv -\frac{1}{2}\Delta \psi + \tau\delta_0(x)\psi = E\psi,
\end{equation}
which corresponds to the problem of a particle moving in the Dirac delta potential of transparency $\tau$ at the point $x=0$. As we are going to show, there are two main differences between the standard and the ultrafunctions approaches,
\begin{enumerate}[label={\arabic*)}]
\item the standard approach changes if we change the dimension of the space (as in the space of dimension $\mathscr{D}>1$ one of the main problems is to find a self-adjoint representation of $\hat{H}$), while in the ultrafunctions approach we always have a self-adjoint representation of $\hat{H}$, independently of $\mathscr{D}$, due to Theorem \ref{herm},
\item in the ultrafunctions approach the constant $\tau$ can be an infinitesimal or an infinite number, which allows us to construct the models which cannot be considered within the standard framework (for example, a potential equal to the characteristic function of a point\footnote{In fact, within standard approach such a potential would be indistinguishable from 0.}).
\end{enumerate}

\subsection{The standard approach}

Let us review the solution of the problem within the standard approach (namely, $\tau\in\mathbb{R}$)~\cite{Fluegge,Belloni}. At first, it is assumed that the particle moves in a box of length $2L$. In turn, the delta potential is approximated as a square wall (well) of a finite length $2\varepsilon$ and a finite height $V_0$,
\begin{eqnarray}
 V_\varepsilon(x) &=& V_0 \theta(|\varepsilon-x|) \xlongrightarrow{V_0 \rightarrow \infty, \; \varepsilon \rightarrow 0} \tau \delta_0(x), \label{Approx1D}
\end{eqnarray}
where the transparency of the potential is related to the parameters of the approximation as $\tau = 2\varepsilon V_0$. In that way, within the standard approach, one solves the Schr\"{o}dinger equation with the approximated potential,
\begin{equation}\label{thisone_approx}
-\frac{1}{2}\Delta \psi + V_\varepsilon(x)\psi = E\psi,
\end{equation}
and, in the obtained solution, takes the limit $\varepsilon\to 0$ and expands the box by the limit $L \to \infty$, in order to recover the original problem.

\subsubsection{Potential barrier ($\tau>0$)}

We start with the case of a potential barrier with $\tau>0$, which is illustrated on the Fig.~\ref{Fig1}.
\begin{center}
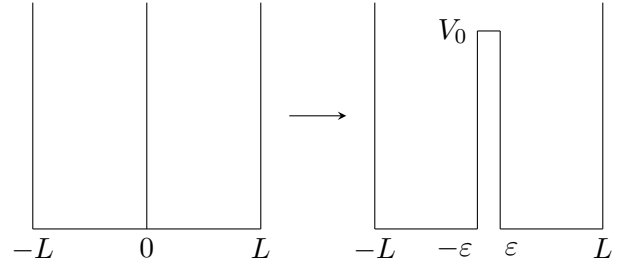
\begin{figure}[h!]
\begin{tikzpicture}[every node/.style={outer sep=0pt}, scale=0.75]
\draw (-2,0) --(-2,4);
\draw (0,0) --(0,4);
\draw (2,0) --(2,4);
\draw (-2,0) --(2,0);
\node [below] at (-2,0) {$-L$};
\node [below] at (0,0) {$0$};
\node [below] at (2,0) {$L$};
\draw [-stealth] (2.5,2) -- (3.5,2);
\draw (4,0) --(4,4);
\draw (5.8,0) --(5.8,3.5);
\draw (6.2,0) --(6.2,3.5);
\draw (8,0) --(8,4);
\draw (4,0) --(5.8,0);
\draw (6.2,0) --(8,0);
\draw (5.8,3.5) --(6.2,3.5);
\node [below] at (4,0) {$-L$};
\node [below] at (5.4,0) {$-\varepsilon$};
\node [below] at (6.4,-0.075) {$\varepsilon$};
\node [below] at (8,0) {$L$};
\node [left] at (5.8,3.5) {$V_0$};
\end{tikzpicture}
\caption{Delta potential as a limit of a finite barrier.}
\label{Fig1}
\end{figure}
\end{center}
Using the abbreviations $k^2 = 2E$ and $\varkappa^2 = 2(V_0 - E)$ we obtain the following solution of the Schr\"{o}dinger equation,
\begin{eqnarray}
    \nonumber \psi^{\pm}(x)&=&\left\{
                \begin{array}{ll}
                  \mp A^{\pm} \sin(k(x+L)), \; x\in[-L, -\varepsilon],\\
                  B^{\pm} f^\pm(x), \; x\in[-\varepsilon, \varepsilon], \\
                  A^{\pm} \sin(k(x-L)), \; x\in[\varepsilon, L],
                \end{array}
              \right.
\end{eqnarray}
where $A^{\pm}$ and $B^{\pm}$ are the normalization constants, $f^+(x) = \cosh(\varkappa x)$, $f^-(x) = \sinh(\varkappa x)$, ``$+$'' corresponds to the even function and ``$-$'' corresponds to the odd function. We seek the energies of the eigenstates, which can be found with use of the continuity condition,
\begin{equation}
 \frac{\psi'(x)}{\psi(x)} \Biggr|_{x \rightarrow \varepsilon - 0} = \frac{\psi'(x)}{\psi(x)} \Biggr|_{x \rightarrow \varepsilon + 0}. \label{Continuity}
\end{equation}

In turn, the original problem is recovered by taking the limit~(\ref{Approx1D}), which leads to the following solution,
\begin{eqnarray}
    \nonumber \psi^{\pm}_n(x)&=&\left\{
                \begin{array}{ll}
                  \mp A^{\pm}_n \sin(k^{\pm}_n(x+L)), \; x\in[-L, 0],\\
                  A^{\pm}_n \sin(k^{\pm}_n(x-L)), \; x\in[0, L],
                \end{array}
              \right.
\end{eqnarray}
with the normalization $|A^{\pm}_n|^{-2} = \frac{2k^{\pm}_n L - \sin(2k^{\pm}_n L)}{2k^{\pm}_n}$ and the quantities $k_n^{\pm}$, which are defined by the following equations due to Eq.~(\ref{Continuity}),
\begin{eqnarray}
k^{+}_n \cot(k^{+}_n L) &=& -\tau, \\
k^{-}_n \cot(k^{-}_n L) &=& -\infty \;\;\Rightarrow\;\; k^{-}_n = \frac{\pi + 2\pi n}{L},
\end{eqnarray}
and, in turn, define the energies $E_n = \frac{k_n^2}{2}$ of the eigenstates. At the singularity point, we obtain
\begin{eqnarray}
    \psi_n^{+}(0)&=&-A^{+}_n \sin(k^{+}_n L), \\
    \psi_n^{-}(0)&=&0.
\end{eqnarray}

\subsubsection{Potential well ($\tau<0$)}

Let us now consider the case of a potential well (illustrated on the Fig.~\ref{Fig2}), namely, the negative transparency $\tau<0$. In this case, we have two different situations with respect to energy, $E > 0$ (scattering states) and $E < 0$ (bound states). 

\begin{center}
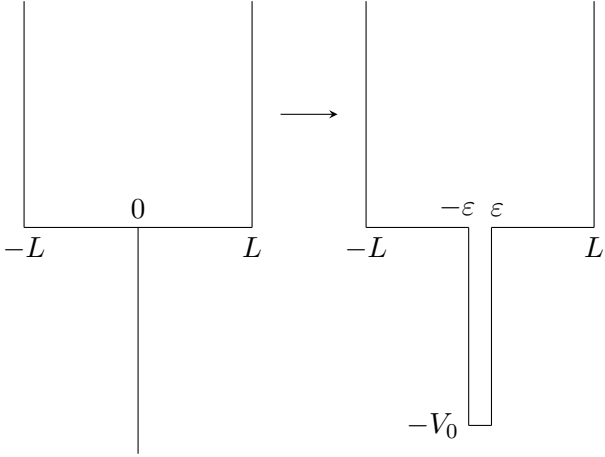
\begin{figure}[h!]
\begin{tikzpicture}[every node/.style={outer sep=0pt}, scale=0.75]
\draw (-2,0) --(-2,4);
\draw (0,0) --(0,-4);
\draw (2,0) --(2,4);
\draw (-2,0) --(2,0);
\node [below] at (-2,0) {$-L$};
\node [above] at (0,0) {$0$};
\node [below] at (2,0) {$L$};
\draw [-stealth] (2.5,2) -- (3.5,2);
\draw (4,0) --(4,4);
\draw (5.8,0) --(5.8,-3.5);
\draw (6.2,0) --(6.2,-3.5);
\draw (8,0) --(8,4);
\draw (4,0) --(5.8,0);
\draw (6.2,0) --(8,0);
\draw (5.8,-3.5) --(6.2,-3.5);
\node [below] at (4,0) {$-L$};
\node [above] at (5.6,0) {$-\varepsilon$};
\node [above] at (6.3,0.0375) {$\varepsilon$};
\node [below] at (8,0) {$L$};
\node [left] at (5.8,-3.5) {$-V_0$};
\end{tikzpicture}
\caption{Delta potential as a limit of a finite well.}
\label{Fig2}
\end{figure}
\end{center}

\paragraph{Positive energies ($E > 0$): scattering states.}
Using the abbreviations $k^2 = 2E$ and $\varkappa^2 = 2(V_0 + E)$ we obtain the following solution of the Schr\"{o}dinger equation,
\begin{eqnarray}
    \nonumber \psi^{\pm}(x)&=&\left\{
                \begin{array}{ll}
                  \mp A^{\pm} \sin(k(x+L)), \; x\in[-L, -\varepsilon],\\
                  B^{\pm} f^{\pm}(x), \; x\in[-\varepsilon, \varepsilon], \\
                  A^{\pm} \sin(k(x-L)), \; x\in[\varepsilon, L],
                \end{array}
              \right.
\end{eqnarray}
where $A^{\pm}$ and $B^{\pm}$ are the normalization constants, $f^+(x) = \cos(\varkappa x)$, $f^-(x) = \sin(\varkappa x)$, ``$+$'' corresponds to the even function and ``$-$'' corresponds to the odd function. By performing the limits as done above one obtains the following solution,
\begin{eqnarray}
    \nonumber \psi^{\pm}_n(x)&=&\left\{
                \begin{array}{ll}
                  \mp A^{\pm}_n \sin(k^{\pm}_n(x+L)), \; x\in[-L, 0],\\
                  A^{\pm}_n \sin(k^{\pm}_n(x-L)), \; x\in[0, L],
                \end{array}
              \right.
\end{eqnarray}
with the normalization $|A^{\pm}_n|^{-2} = \frac{2k^{\pm}_n L - \sin(2k^{\pm}_n L)}{2k^{\pm}_n}$ and the quantities $k_n^{\pm}$, which are defined by the equations
\begin{eqnarray}
k^{+}_n \cot(k^{+}_n L) &=& \tau, \\
k^{-}_n \cot(k^{-}_n L) &=& \infty \;\;\Rightarrow\;\; k^{-}_n = \frac{2\pi n}{L},
\end{eqnarray}
and, in turn, define the energies $E_n = \frac{k_n^2}{2}$ of the eigenstates. At the singularity point, we obtain
\begin{eqnarray}
    \psi_n^{+}(0)&=&-A^{+}_n \sin(k^{+}_n L), \\
    \psi_n^{-}(0)&=&0.
\end{eqnarray}

\paragraph{Negative energies ($E < 0$): bound states.}
Using the abbreviations $k^2 = 2|E|$ and $\varkappa^2 = 2(V_0 - |E|)$ we obtain the following solution of the Schr\"{o}dinger equation,
\begin{eqnarray}
    \nonumber \psi^{\pm}(x)&=&\left\{
                \begin{array}{ll}
                  \mp A^{\pm} \sinh(k(x+L)), \; x\in[-L, -\varepsilon],\\
                  B^{\pm} f^\pm (x), \; x\in[-\varepsilon, \varepsilon], \\
                  A^{\pm} \sinh(k(x-L)), \; x\in[\varepsilon, L],
                \end{array}
              \right.
\end{eqnarray}
where $A^{\pm}$ and $B^{\pm}$ are the normalization constants, $f^+(x) = \cos(\varkappa x)$, $f^-(x) = \sin(\varkappa x)$, ``$+$'' corresponds to the even function and ``$-$'' corresponds to the odd function. By performing the limits as done above one obtains the following solution,
\begin{eqnarray}
    \nonumber \psi^{\pm}_n(x)&=&\left\{
                \begin{array}{ll}
                  \mp A^{\pm}_n \sinh(k^{\pm}_n(x+L)), \; x\in[-L, 0],\\
                  A^{\pm}_n \sinh(k^{\pm}_n(x-L)), \; x\in[0, L],
                \end{array}
              \right.
\end{eqnarray}
with the normalization $|A^{\pm}_n|^{-2} = \frac{2k^{\pm}_n L - \sinh(2k^{\pm}_n L)}{2k^{\pm}_n}$ and the quantities $k_n^{\pm}$, which are defined by the equations
\begin{eqnarray}
k^{+}_n \coth(k^{+}_n L) &=& \tau, \label{BoundStates} \\
k^{-}_n \coth(k^{-}_n L) &=& \infty,
\end{eqnarray}
and, in turn, define the energies $E_n = -\frac{k_n^2}{2}$ of the eigenstates. In this way, we find that, for $E<0$, there exists a \textit{unique} value $k^+$ defined by Eq.~(\ref{BoundStates}). With expanding the box by taking the limit $L \rightarrow \infty$, it corresponds to the energy of the \textit{unique} bound state
\begin{equation}
E_{1D} = -\frac{\tau^2}{2}.
\end{equation}
At the singularity point, we obtain
\begin{eqnarray}
    \psi^{+}(0) &=& -A^{+} \sin(k^{+} L).
\end{eqnarray}

\subsubsection{Multidimensional $\delta$-potentials}

In the analysis of the delta potential, we have focused on the 1-dimensional case. In this case, one can easily prove that there exists a self-adjoint realization of the Hamiltonian $\hat{H}$ bounded below. As shown above, there exists a unique bound state for a Hamiltonian with a one-dimensional delta potential well.

The delta potentials of a higher dimensionality $\mathscr{D}$ provide not only a pedagogical model but find their applications in nuclear and condensed matter physics~\cite{Mitra, deLlano, GeltmanDelta, Farrell}. However, solving a Schr\"{o}dinger equation with these potentials is a more cumbersome problem since there is no rigorous construction of a self-adjoint realization of the Hamiltonian in Eq.~(\ref{thisone}) as long as it is not extended to a non-interacting Hamiltonian on a space with a removed point~\cite{AlbeverioBook, Jackiw, Albeverio1995, Albeverio2000, DellAntonio, Scandone}. Let us illustrate it by the case of a two-dimensional and three-dimensional delta potential well.

As in the one-dimensional case, the delta potential well of dimensionality $\mathscr{D}$ can be approximated by a finite square well described by the Heaviside function $\theta(r)$,
\begin{equation}
 V_\varepsilon(r) = -V_0 \theta(\varepsilon-r) \xlongrightarrow{V_0 \rightarrow \infty, \; \varepsilon \rightarrow 0} -\tau \delta_0(r),
\end{equation}
where its depth $V_0$ and length $\varepsilon$ are related to the transparency of the delta potential well as $\tau_{2D} = \pi \varepsilon^2 V_0$ in the two-dimensional case and $\tau_{3D} = \frac{4\pi}{3} \varepsilon^3 V_0$ in the three-dimensional case, respectively~\cite{deLlano, GeltmanDelta}.

The most interesting part of the problem with a delta potential well is the estimation of the bound states. After solving the corresponding Schr\"{o}dinger equation, the bound states can be found from the continuity equation. In the two-dimensional case, it is more difficult to do, because the wave functions are Bessel functions in this case. However, since the Bessel function $J_0(r)$ with an azimuthal symmetry asymptotically behaves as a cosine, the number of the bound states can be approximately estimated as $N = \varepsilon\kappa_{max}/\pi$, where $\kappa^2_{max} = 2V_0$. Plugging in the transparency, we find
\begin{equation}
 N_{2D} \simeq \frac{1}{\pi}\sqrt{\frac{2\tau}{\pi}},
\end{equation}
which is a finite number. The solution for the three-dimensional potential well can be found in the form of trigonometric functions, that simplifies the calculations. The number of bound states is given by~\cite{deLlano, GeltmanDelta}
\begin{equation}
 N_{3D} \simeq \frac{1}{\pi}\sqrt{\frac{3\tau}{2\pi \varepsilon}},
\end{equation}
which is already an infinite number.

The energy of a bound state for the delta potential can be found by solving the following equation~\cite{Cavalcanti, Nyeo},
\begin{equation}
 \frac{1}{\tau} + \frac{1}{(2\pi)^\mathscr{D}}\int \frac{d^\mathscr{D} k}{k^2 - E} = 0,
\end{equation}
where $\mathscr{D}$ is the number of dimensions. The integral in this equation is divergent for $\mathscr{D}\geq2$, that results in infinite energies of the bound states. In particular, in the two-dimensional case, it can be shown that~\cite{deLlano}
\[ E_{2D} \simeq -\frac{1}{2\varepsilon^2}e^{-\frac{2\pi}{\tau}},\] 
which is infinite. In this way, we conclude that multidimensional delta potential wells provide, generally speaking, an infinite number of bound states with the energies diverging to $-\infty$.

In order to avoid the problems with the infinite quantities (diverging energies) within the standard approach, one usually applies the so-called regularization and renormalization procedures to the calculations. The regularization provides a cut-off\footnote{In the renormalization theory, as well as in the references, the cut-off is denoted by $\Lambda$. However, to avoid confusion with the $\Lambda$-limit we prefer to change the notation here, as we will not discuss renormalization in detail in this paper.}$\sigma$ for the divergent integral, and the renormalization re-defines the transparency $\tau$ via $\tau_{R}$ (where $R$ stands for ``renormalized''), which absorbs the dependence on the cut-off in order to absorb the divergence of the integral.

In particular, in the two-dimensional case, the renormalized transparency is defined by the following equation~\cite{Cavalcanti, Nyeo},
\begin{equation}
\frac{1}{\tau_R} = \frac{1}{\tau} + \frac{1}{4\pi} \operatorname{ln}\left(\frac{\sigma^2}{\omega^2}\right),
\end{equation}
where $\omega$ is an arbitrary parameter, which represents the renormalization scale. Then, one takes a limit $\sigma\to\infty$ and varies the ``bare'' transparency $\tau$ in such a way that the renormalized transparency $\tau_R$ remains finite~\cite{Cavalcanti, Nyeo}. Such a renormalized transparency leads to a unique bound state with the binding energy
\begin{equation}
E_{2D} = -\omega^2 e^{-\frac{4\pi}{\tau_R}}.
\end{equation}
Finally, this procedure means that one introduces an additional scale in order to ``forget'' about the unboundedness of the Hamiltonian from below.

\subsection{The ultrafunctions approach}

\subsubsection{Solution in the singularity point}
In the ultrafunctions approach, the Schr\"odinger equation reads
\begin{equation}\label{ultraschro} -D^{2}u+\tau\delta_{0} u=Eu, \end{equation}
where $\tau, E \in \mathbb{R}^*$. We use the linearity of $V^{\circ}$ to write the solution $u$ of Eq.~(\ref{ultraschro}) as 
\[ u(x) = \sum_{a\in\Gamma} u(a) \chi_a(x).\]
Therefore, we can rewrite Eq.~(\ref{ultraschro}) as
\begin{eqnarray}\label{schr1}
&-& \nonumber \sum_{a\in\Gamma} u(a) D^2\chi_a(x) + \tau \sum_{a\in\Gamma} u(a) \chi_a(x) \delta_{0}(x)\\
&=& E \sum_{a\in\Gamma} u(a) \chi_a(x).
\end{eqnarray}
For every $a\in\Gamma$, let $D^{2}\chi_{a}(x):=\sum_{b\in\Gamma}D_{a,b}\chi_{b}(x)$. Notice that the locality of the ultrafunctions derivative entails that, actually, we can write\footnote{Readers familiar with nonstandard analysis will notice that our notation $\sum_{b\in\mathfrak{mon}(a)}$ is not proper, as $\mathfrak{mon}(a)\cap\Gamma$ is not an internal set (hence, in particular, it is not hyperfinite). However, by underspill it is trivial to prove that $D_{a,b}\neq 0$ only for a hyperfinite amount of points $b\in\mathfrak{mon}(a)\cap\Gamma$, as we have already noticed in Section \ref{aduB}, hence ours is just a small abuse of notations.} \[
D^2 \chi_a(x) = \sum_{b \in \Gamma \cap \mathfrak{mon}(a)} D_{a,b} \chi_b(x).
\]
Therefore, since $\sum_{a\in\Gamma} u(a) \delta(a) \chi_a(x) = u(0)\frac{\chi_0(x)}{d(0)}$ we can rewrite Eq.~(\ref{schr1}) as
\begin{eqnarray}
&-& \nonumber \sum_{a\in\Gamma} \left(u(a) \sum_{b\in\Gamma\cap\mathfrak{mon}(a)}D_{a,b}\chi_{b}(x)\right) + \tau u(0)\frac{\chi_{0}(x)}{d(0)} \\
&=& E \sum_{a\in\Gamma} u(a) \chi_a(x).
\end{eqnarray}

Let us explicitly note that the above discussion does not depend on the dimensionality of the space. In fact, for any number of dimensions, Eq.~(\ref{schr1}) reveals the self-adjoint representation of the Hamiltonian in Eq.~(\ref{thisone}) in the ultrafunctions setting.

Outside the monad of $0$, the equation actually reads
\begin{equation}
-\sum_{a\in \mathfrak{mon}(x) \cap \Gamma} u(a) D^2_a(x) = Eu(x),
\end{equation}
which is formally the same that one gets for the equation 
\begin{equation}
\label{ultraschronodelta}
-D^{2}u = Eu.
\end{equation}
On the other hand, at the point $0\in\Gamma$, we obtain the equation
\begin{equation}
\nonumber -\sum_{a\in \mathfrak{mon}(0) \cap \Gamma} u(a) D^2_a(0) + \frac{\tau}{d(0)} u(0) = E u(0),
\end{equation}
which corresponds to 
\begin{equation}
\nonumber u(0) = \frac{\sum\limits_{a\in \mathfrak{mon}(0) \cap \Gamma} \left(u(a) \sum\limits_{b\in\Gamma\cap\mathfrak{mon}(0)}D_{a,b}\chi_{b}(x)\right)}{E-\frac{\tau}{d(0)}}.
\end{equation}

Notice that, for every other point $\gamma \in \mathfrak{mon}(0) \cap \Gamma$, we obtain
\begin{equation}
\nonumber u(\gamma) = \frac{\sum\limits_{a\in \mathfrak{mon}(0) \cap \Gamma} \left(u(a) \sum\limits_{b\in\Gamma\cap\mathfrak{mon}(0)}D_{a,b}\chi_{b}(x)\right)}{E}.
\end{equation}

In particular, it shows that, in the basis $\{\chi_{a}\}_{a\in\Gamma}$, the matrix representation $M_{\delta}$ of Eq.~(\ref{ultraschro}) and the matrix representation $M_{\Delta}$ of Eq.~(\ref{ultraschronodelta}) satisfy the relation $M_{\delta}=M_{\Delta}+H$, where $H$ is a matrix with 
\begin{equation*} H_{a,b}=
\begin{cases}
\frac{\tau}{d(0)}, & \text{if}\ a=b=0,\\
0, & \text{otherwise.}
\end{cases}
\end{equation*}
Particularly, in the case of a one-dimensional potential well with an infinitesimal transparency, there will still exist a unique bound state with infinitesimal negative energy.

\subsubsection{Connection with the standard solutions}
As usual, in order to introduce a new solution concept for a standard problem, it is important to discuss the relationship between the standard and new solution. Our first result shows that the approximation procedure used in most standard approaches to the Schr\"{o}dinger equation with a delta potential can be reproduced within the ultrafunctions approach, in the following sense.
\begin{theorem}
\label{ultraExistence} Let $0\sim a_{\Lambda}=\lim_{\lambda \uparrow \Lambda} a_\lambda$ and $\tau_{\Lambda} = \lim_{\lambda \uparrow \Lambda} \tau_\lambda$. Then there exists a space of ultrafunctions $V^{\circ}$ that contains a solution $u_\Lambda$ of the equation
\begin{equation}
-D^{2}u_\Lambda + \tau_\Lambda\delta_{0} u_\Lambda = E_\Lambda u_\Lambda,
\end{equation}
which is a $\Lambda$-limit of a net of solutions $u_\lambda \in V(\Omega)$ of the standard Schr\"{o}dinger equations 
\begin{equation}\label{approx schr}
-\Delta u_\lambda + \tau_\lambda A_{a_{\lambda}} u_\lambda = E_\lambda u_\lambda,
\end{equation}
where $A_{a_{\lambda}}$ is an approximation of the delta distribution by a square well of length $a_\lambda \in \mathbb{R}$, and, for every $\lambda$, $E_{\lambda}$ is an eigenvalue of the energy of the solution of the standard Schr\"{o}dinger equation with the approximated delta potential such that $E_\Lambda = \lim_{\lambda \uparrow \Lambda} E_\lambda$.
\end{theorem}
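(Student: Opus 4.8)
The plan is to produce the ultrafunction solution as a genuine $\Lambda$-limit of the standard box-plus-square-well solutions reviewed in the previous subsection, and then to read off the ultrafunction equation by transporting the standard equation through the algebraic properties $(\Lambda\text{-}3)$ of the $\Lambda$-limit together with Axiom 4. Concretely, I would first fix, for each $\lambda\in\mathfrak{L}$, a box half-length $L_\lambda$, a well width $a_\lambda$, and a transparency $\tau_\lambda$, normalising the approximating potential as $A_{a_\lambda}=\tfrac{1}{a_\lambda}\chi_{[-a_\lambda/2,\,a_\lambda/2]}$ so that $\int A_{a_\lambda}\,dx=1$. The standard analysis of the finite well in a box guarantees, for each $\lambda$, an eigenpair $(E_\lambda,u_\lambda)$ with $u_\lambda\in V(\Omega)$ solving $-\Delta u_\lambda+\tau_\lambda A_{a_\lambda}u_\lambda=E_\lambda u_\lambda$. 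Setting $u_\Lambda=\lim_{\lambda\uparrow\Lambda}u_\lambda$, $E_\Lambda=\lim_{\lambda\uparrow\Lambda}E_\lambda$ and $\tau_\Lambda=\lim_{\lambda\uparrow\Lambda}\tau_\lambda$, Axiom 1 certifies $u_\Lambda\in V^{\circ}$ once the family $\{V_\lambda\}$ has been chosen to contain the $u_\lambda$.

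The delicate part is the choice of the family $\{V_\lambda\}_{\lambda\in\mathfrak{L}}$ (equivalently, of the induced grid $\Gamma$ and weights $d(a)$) so that the $\Lambda$-limit of the approximating potential is \emph{exactly} the delta ultrafunction. I would build each $V_\lambda$ to contain $u_\lambda$ together with its partial derivatives up to order two, so that Axiom 4 applied to $D^2=\sum_j D_j^2$ yields $D^2u_\Lambda=\lim_{\lambda\uparrow\Lambda}\Delta u_\lambda$, and I would couple the width $a_\lambda$ to the geometry of $\Gamma$ so that two conditions hold. First, the height of the well should match the weight of the origin, i.e.\ $\lim_{\lambda\uparrow\Lambda}\tfrac{1}{a_\lambda}=\tfrac{1}{d(0)}$, which by Axiom 3 means $d(0)=\lim_{\lambda\uparrow\Lambda}a_\lambda$. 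Second, the support of the well must collapse onto the single grid point $0$: for every grid point $x=\lim_{\lambda\uparrow\Lambda}x_\lambda\in\Gamma$ with $x\neq 0$ one needs $x_\lambda\notin[-a_\lambda/2,\,a_\lambda/2]$ eventually, so that $\lim_{\lambda\uparrow\Lambda}A_{a_\lambda}(x_\lambda)=0$, while at $x=0$ one has $A_{a_\lambda}(0)=\tfrac{1}{a_\lambda}$. Together these give the grid-function identity $\lim_{\lambda\uparrow\Lambda}A_{a_\lambda}=\tfrac{\chi_0}{d(0)}=\delta_0$. This can be arranged by keeping the nearest grid neighbour of $0$ at distance larger than $a_\lambda/2$, i.e.\ by keeping the local grid spacing comparable to but strictly larger than the well width.

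With this in place the verification is routine. Applying $\lim_{\lambda\uparrow\Lambda}$ to $-\Delta u_\lambda+\tau_\lambda A_{a_\lambda}u_\lambda=E_\lambda u_\lambda$ and using $(\Lambda\text{-}3)$ to distribute the limit across sums and products gives, at every grid point,
\begin{equation*}
-D^2u_\Lambda(x)+\Bigl(\lim_{\lambda\uparrow\Lambda}\tau_\lambda A_{a_\lambda}(x_\lambda)\Bigr)u_\Lambda(x)=E_\Lambda u_\Lambda(x).
\end{equation*}
The derivative term is handled by Axiom 4, and the potential term equals $\tau_\Lambda\delta_0(x)u_\Lambda(x)$ by the identity $\lim_{\lambda\uparrow\Lambda}A_{a_\lambda}=\delta_0$ established above, so that $u_\Lambda$ solves $-D^2u_\Lambda+\tau_\Lambda\delta_0 u_\Lambda=E_\Lambda u_\Lambda$ as required.

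The main obstacle is entirely contained in the second paragraph: simultaneously matching the well height to $1/d(0)$ and forcing the well to see no grid point other than the origin. This is the step where the freedom granted by the phrase \emph{``there exists a space of ultrafunctions $V^{\circ}$''} is essential, since one constructs the grid $\Gamma$ and the approximation net in tandem; once the two regimes $a_\lambda\to d(0)$ and ``local spacing $>a_\lambda/2$'' are made compatible, everything else follows from the limit laws $(\Lambda\text{-}1)$--$(\Lambda\text{-}3)$ and Axiom 4. A secondary, purely technical point is to ensure that $u_\lambda$ and its first two derivatives all lie in the finite-dimensional $V_\lambda$, which may require mollifying the well edges so that $u_\lambda\in\mathcal{C}_{c}^{2}$; this does not affect the limiting equation.
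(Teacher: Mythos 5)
Your first paragraph is, in essence, the paper's entire proof: the authors take a net of solutions $u_\lambda$ of the approximated equations, replace $V_\lambda$ by $\operatorname{span}(V_\lambda\cup\{u_\lambda\})$ whenever $u_\lambda\notin V_\lambda$, and declare $u_\Lambda=\lim_{\lambda\uparrow\Lambda}u_\lambda$ to be the desired ultrafunction. They stop there; in particular they never verify that $u_\Lambda$ actually satisfies the limiting equation with the delta \emph{ultrafunction} $\delta_0=\chi_0/d(0)$ as the potential. Your second and third paragraphs therefore go beyond the paper and correctly isolate the point it leaves implicit: one needs $\lim_{\lambda\uparrow\Lambda}\tau_\lambda A_{a_\lambda}$, read as a grid function, to equal $\tau_\Lambda\,\chi_0/d(0)$, and the rest then follows from $(\Lambda\text{-}3)$ and Axiom 4 exactly as you say.

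The difficulty is that your argument for that identity is asserted rather than proved, and it sits uneasily with the paper's own setup. The grid $\Gamma$ is fixed ``once and forever'' before any ultrafunctions space is chosen, and the weight $d(0)=\oint\chi_0\,dx$ is not a dial you can turn: by Axioms 2--3 it is the $\Lambda$-limit of ordinary integrals over a representing net for $\chi_0$, hence it is determined by the (deliberately unspecified) explicit construction of $V^{\circ}$ that the paper defers to a reference in preparation. So the two conditions you need --- $d(0)=\lim_{\lambda\uparrow\Lambda}a_\lambda$ and ``the well contains no grid point other than $0$'' --- cannot be arranged from the axioms alone, and establishing their mutual compatibility is precisely the step you flag as the main obstacle but do not close. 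A secondary point in the same spirit: the square-well eigenfunctions are not $\mathcal{C}^1_c$ at $\pm L$ (the derivative of $\sin(k(x\mp L))$ does not vanish there), so invoking Axiom 4 for $D^2u_\Lambda=\lim_{\lambda\uparrow\Lambda}\Delta u_\lambda$ needs the mollification you mention, made explicit. In short, the part of your proof that overlaps the paper is correct and identical in approach; the additional verification you attempt is the right thing to want, but it contains a genuine unclosed gap, and to be fair the paper itself simply does not address it.
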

\begin{proof} For every $\lambda\in\Lambda$, let $u_{\lambda}$ be a solution of Eq.~(\ref{approx schr}). Let $\{V_{\lambda}\}_{\lambda}$ be an increasing net of finite dimensional subspaces of $V(\Omega)$ as in Definition \ref{def ultra}.
If $\forall \lambda$ the solution $u_\lambda \in V_\lambda$, then a $\Lambda$-limit $u_\Lambda = \lim_{\lambda \uparrow \Lambda} u_\lambda$ is contained in the space of ultrafunctions $V^{\circ}$ generated by the net $\{V_\lambda\}_{\lambda}$, and theorem is proven.

If $u_\lambda \notin V_\lambda$, then we extend $V_\lambda$ by 
\[  V^{\prime}_\lambda := \operatorname{span}(V_\lambda \cup \{ u_\lambda \}), \]
and build the $\Lambda$-limit with the net $\{V^{\prime}\}_\lambda$. In this way, the $\Lambda$-limit of the net $\{u_\lambda\}$ gives an ultrafunction in $V^{\circ}$ by construction.
\end{proof}

The second result we want to show precises the relationship between the ultrafunction and standard solution in the case of finite energy and transparency.
\begin{theorem}
\label{StandardConnection}
Let $u\in V^{\circ}$ be an ultrafunction solution of the Eq.~(\ref{ultraschro}), where the transparency $\tau$ and the energy $E$ are finite numbers in $\mathbb{R}^{\ast}$. Let $w\in\mathcal{D}^{\prime}(\Omega)$ be such that\footnote{In condition 2, we have a distributional product $\delta\cdot w$, which is not defined, in general. We tacitly assume that $w$ is such that this product makes sense as a distribution. This is the case, e.g., when $w$ is smooth. Notice that, in contrast, the product of ultrafunctions is always well defined.}
\begin{enumerate}
 \item $[u]_{\mathcal{D}(\Omega)} = w$,
 \item $[\delta \cdot u]_{\mathcal{D}(\Omega)} = \delta \cdot w$. 
\end{enumerate}
Then $w$ is a solution of the standard Schr\"{o}dinger equation
\begin{equation}-\Delta w + 2\operatorname{st}(\tau) \delta_0 w = \operatorname{st}(E) w. \end{equation}
\end{theorem}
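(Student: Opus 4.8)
The plan is to recover the standard equation by testing the ultrafunction identity~(\ref{ultraschro}) against the canonical ultrafunction $\varphi^{\circ}$ attached to an arbitrary test function $\varphi\in\mathcal{D}(\Omega)$, and then passing to standard parts. Since the pointwise integral is linear (it is a hyperfinite sum), applying $\oint\,\cdot\,\varphi^{\circ}\,dx$ to both sides of $-D^{2}u+\tau\delta_{0}u=Eu$ produces a single scalar identity in $\mathbb{C}^{\ast}$,
\begin{equation*}
-\oint (D^{2}u)\,\varphi^{\circ}\,dx + \tau\oint (\delta_{0}u)\,\varphi^{\circ}\,dx = E\oint u\,\varphi^{\circ}\,dx .
\end{equation*}
The entire argument then consists in evaluating $\operatorname{st}$ of each of these three terms and matching it against the action of a standard distribution on $\varphi$.

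First I would treat the kinetic term. Applying the weak Leibniz rule (Axiom~\ref{7}) twice moves the generalized Laplacian off $u$, giving $\oint (D^{2}u)\,\varphi^{\circ}\,dx = \oint u\,(D^{2}\varphi^{\circ})\,dx$ (the two sign flips cancel). Because $\varphi$ is smooth with compact support, Axiom~\ref{4} identifies $D^{2}\varphi^{\circ}=(\Delta\varphi)^{\circ}$, and since $\Delta\varphi\in\mathcal{D}(\Omega)$, hypothesis~1 together with Theorem~\ref{bello} yields
\begin{equation*}
\operatorname{st}\!\left(\oint u\,(\Delta\varphi)^{\circ}\,dx\right)=\langle w,\Delta\varphi\rangle=\langle \Delta w,\varphi\rangle .
\end{equation*}
Thus, up to an infinitesimal, the kinetic term contributes $-\langle\Delta w,\varphi\rangle$.

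For the right-hand side, the finiteness of $E$ and the fact that $w\in\mathcal{D}^{\prime}(\Omega)$ (so that $\oint u\,\varphi^{\circ}\,dx$ is finite) let me factor the standard part, $\operatorname{st}(E\oint u\,\varphi^{\circ}\,dx)=\operatorname{st}(E)\langle w,\varphi\rangle$. The genuinely delicate term is the singular one: the naive distributional product $\delta\cdot w$ is undefined in general, which is precisely why hypothesis~2 is imposed. That hypothesis asserts $[\delta\cdot u]_{\mathcal{D}(\Omega)}=\delta\cdot w$, i.e. $\operatorname{st}(\oint (\delta_{0}u)\,\varphi^{\circ}\,dx)=\langle\delta\cdot w,\varphi\rangle$, and combined with the finiteness of $\tau$ this gives $\operatorname{st}(\tau\oint (\delta_{0}u)\,\varphi^{\circ}\,dx)=\operatorname{st}(\tau)\langle\delta\cdot w,\varphi\rangle$. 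Collecting the three standard parts and using that $\varphi\in\mathcal{D}(\Omega)$ was arbitrary yields the asserted standard Schr\"odinger equation, with the coefficient of the singular term read off from $\operatorname{st}$ of this contribution (the explicit numerical constant being fixed by the convention relating $D^{2}$ to the physical kinetic operator $-\tfrac{1}{2}\Delta$).

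The main obstacle is the singular term $\tau\delta_{0}u$. Two points must be handled with care. First, one may distribute $\operatorname{st}$ across the sum and pull it through the scalars $\tau$ and $E$ only because every term is finite; this is guaranteed by $w\in\mathcal{D}^{\prime}(\Omega)$, by the boundedness built into hypothesis~2, and by the finiteness assumptions on $\tau$ and $E$. Second, the product $\delta\cdot w$ has no intrinsic meaning at the distributional level, so the identity $\operatorname{st}(\oint(\delta_{0}u)\varphi^{\circ}dx)=\langle\delta\cdot w,\varphi\rangle$ cannot be derived and must be postulated — which is exactly the role of condition~2. Everything else (linearity of $\oint$, the two integrations by parts via Axiom~\ref{7}, and the identification $D^{2}\varphi^{\circ}=(\Delta\varphi)^{\circ}$ via Axiom~\ref{4}) I expect to be routine.
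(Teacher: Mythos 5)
Your proposal is correct and follows essentially the same route as the paper's proof: test Eq.~(\ref{ultraschro}) against $\varphi^{\circ}$ for arbitrary $\varphi\in\mathcal{D}(\Omega)$, move $D^{2}$ onto the test function by two applications of Axiom~\ref{7} together with Axiom~\ref{4}, and invoke hypotheses 1 and 2 to pass to standard parts of the kinetic and singular terms respectively. The only minor (and arguably favourable) difference is that the paper first divides the equation by $E$ and decomposes $u=w^{\circ}+\psi$, which tacitly requires $\operatorname{st}(E)\neq 0$, whereas you keep $E$ on the right-hand side and take standard parts term by term, so no such restriction is needed.
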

\begin{proof}
By the definition of $[u]_{\mathcal{D}(\Omega)}$, we can write
\begin{equation}
 u = w^\circ + \psi,
\end{equation}
where $\psi$ is such that for any $f \in \mathcal{D}(\Omega)$,
\begin{equation}
 \langle \psi, f^\circ \rangle \sim 0.
\end{equation}
If $f\in\mathcal{D}(\Omega)$, we have
\begin{equation}
 \nonumber \langle w, f \rangle = \langle w^\circ, f^\circ \rangle = \langle u, f^\circ \rangle - \langle \psi, f^\circ \rangle \sim \langle u, f^\circ \rangle.
\end{equation}
Using Eq.~(\ref{ultraschro}) we can write
\begin{equation}
 \nonumber \langle u, f^\circ \rangle = \frac{1}{E} \Bigl( \langle D^2 u , f^\circ \rangle + \tau \langle \delta_0 u, f^\circ \rangle \Bigr).
\end{equation}
The first term can be rewritten in the following way,
\begin{eqnarray}
 \nonumber \frac{1}{E} \langle D^2 u , f^\circ \rangle &=& \frac{1}{E} \langle u , D^2 f^\circ \rangle \\
 \nonumber &\sim& \frac{1}{\operatorname{st}(E)} \langle w , \Delta f \rangle = \frac{1}{\operatorname{st}(E)} \langle \Delta w , f \rangle.
\end{eqnarray}
Using the hypothesis on $[\delta\cdot u]_{\mathcal{D}(\Omega)}$, the second term can be rewritten in the following way,
\begin{eqnarray}
 \nonumber \frac{\tau}{E} \langle \delta_0 u , f^\circ \rangle &\sim& \frac{2\operatorname{st}(\tau)}{\operatorname{st}(E)} \langle \delta_0 u, f^\circ \rangle \sim \frac{2\operatorname{st}(\tau)}{\operatorname{st}(E)} \langle \delta_0 w, f \rangle.
\end{eqnarray}
This means that
\begin{equation}
 \nonumber \langle w, f \rangle = \frac{1}{\operatorname{st}(E)} \langle \Delta w, f \rangle + \frac{2\operatorname{st}(\tau)}{\operatorname{st}(E)} \langle \delta_0 w, f \rangle,
\end{equation}
or, equivalently,
\begin{equation}
 \nonumber \Delta w + 2\operatorname{st}(\tau) \delta_0 w = \operatorname{st}(E) w,
\end{equation}
which is the corresponding standard Schr\"{o}dinger equation.
\end{proof}

\subsubsection{Bounds for the energy}
Some easy bounds on the energy of the solution of Eq.~(\ref{ultraschro}) can be found using the scalar product of ultrafunctions. By Axiom 7 of the space of ultrafunctions, 
\begin{equation}
\langle \Delta u(x), u(x) \rangle = - \langle Du(x), Du(x) \rangle \leq 0.
\end{equation}
Using the Schr\"{o}dinger equation we can rewrite the previous equation, in order to get
\begin{equation}
\nonumber \langle \Delta u(x), u(x) \rangle = -2\Bigl\langle \Bigl(E - \tau\delta_0(x)\Bigr) u(x), u(x) \Bigr\rangle \leq 0.
\end{equation}
Calculating the scalar product and applying the normalization of the wave function $||u(x)||^2 = \sum_{a\in\Gamma} u^2(a) d(a) = 1$ we obtain
\begin{equation}
E - \tau u^2(0) \geq 0.
\end{equation}
Taking into account that the transparency can take both positive (potential barrier) and negative (potential well) values we obtain inequalities for allowed energies
\begin{eqnarray}
\frac{E}{\tau} &\geq& u^2(0), \;\;\tau\geq 0, \\
\frac{E}{\tau} &\leq& u^2(0), \;\;\tau < 0.
\end{eqnarray}
The first inequality implies
\begin{equation}
E \geq 0, \;\; \tau\geq 0,
\end{equation}
which means that the negative energies are not allowed in the case of a potential barrier.

In order to understand better the second inequality, let us take a look to the following inequality,
\begin{equation}
u^2(0) d(0) \geq \sum\limits_{a \in \Gamma} u^2(a) d(a) = ||u(x)||^2 = 1.
\end{equation}
Implying this inequality results in
\begin{equation}
|E| \leq \frac{2|\tau|}{d(0)}, \;\; \tau < 0.
\end{equation}
In this way, we see that the value of transparency of the potential well establishes an upper bound for the allowed negative energies. This bound is valid not for the one-dimensional potential only, but for the multidimensional ones as well.

Let us notice that this bound reveals a possibility to fix always an (infinitesimal) transparency $\tau$ close to $d(0)$, so that there exists a solution of the corresponding Schr\"{o}dinger equation with finite energy. For example, we can fix $\tau=d(0)$. Notice that, in this case, the potential $\tau \delta_0$ is equal to the ultrafunction indicator $\chi_{0}$, that in the standard approach would not differ from $0$.\par

\vspace*{5.5mm}
\subsubsection{Splitting of the ultrafunction solution}

Let us discuss a particular aspect of non-Archimedean mathematics in general, and ultrafunctions in particular briefly. As we have highlighted above, the ultrafunction solution of Eq.~(\ref{thisone}) has the form of the sum of the ``extended'' standard solution $\psi^{\circ}$ and the non-Archimedean contribution $\varphi$. Now, one might ask whether it is possible for $\varphi$ to be concentrated in the monad of zero or have some similar additional property.

In principle, it is possible if the delta distribution $\delta$ is represented within the ultrafunctions framework differently. In fact, we have used the delta ultrafunction $\delta_{0}=\frac{\chi_{0}}{d(0)}$ to model the delta distribution, which appears in the standard Hamiltonian in Eq.~(\ref{thisone}), because we believe this is a natural approach to follow in our setting. However, there are several ultrafunctions $u$ in $V^{\circ}(\Omega)$ which correspond to $\delta_{0}$ as distributions, so that the scalar product $\langle u, f^{\circ} \rangle=f(0)$ for every $f\in\mathcal{D}(\Omega)$\footnote{This is the notion of ``association'', which is often considered in non-Archimedean mathematics when dealing with distributions, see, e.g., \cite{col85}.}. Of course, by changing the set of test functions, we obtain different conditions. For example, if we use $V^{\circ}$ instead of $\mathcal{D}(\Omega)$, then the unique ``delta'' is precisely the delta ultrafunction. In this sense, there are several different models of the Hamiltonian in Eq.~(\ref{thisone}) that could be constructed in $V^{\circ}(\Omega)$. This gives us a degree of freedom which is not present in the standard approach.

\subsubsection{Comparison of the approaches}

We summarize the discussion of the interpretation of the Schr\"{o}dinger equation with a delta potential within the ultrafunctions approach in the following table.

\setlength{\tabcolsep}{12pt}
\begin{widetext}
\begin{center}
\begin{tabular}{ p{6.5cm}  p{6.5cm} }
   \textbf{Standard approach} & \textbf{Ultrafunction approach} \\
  \hline            
  Analysis of the solutions of the Schr\"{o}dinger equation~(\ref{thisone}) needs an approximation of the delta function by a finite potential (for example, a square barrier/well) and performing a limit of its zero width. & The solutions can be analyzed directly with the use of the delta ultrafunction, which describes an infinite jump concentrated in $0$. However, the approximation procedure can be used as well due to Theorem~\ref{ultraExistence}.  \\
  \hline            
\end{tabular}

\begin{tabular}{ p{6.5cm}  p{6.5cm} }
    Multidimensional delta potentials are difficult to interpret because of the divergence of the corresponding integrals. & There is a unique framework for the delta potentials of any dimension based on hyperfinite spaces and, thus, hyperfinite matrices. Infinite or infinitesimal numbers can be used as parameters of the equation. In this way, a model comes out which better describes the physical phenomenon. \\
  \hline    
  The number of the eigenstates depends on the number of physical dimensions $\mathscr{D}$. In particular, the number of bound states is infinite for $\mathscr{D}\geq 3$. & There always exists the number of eigenstates equal to the hyperfinite dimensionality of the space of ultrafunctions.\\
  \hline            
   The energies of the eigenstates are finite only for a one-dimensional delta potential. & Natural bounds for the energies of the eigenstates can be estimated independently of the dimension of the system. Moreover, it is possible to obtain the bound states with finite energies by fixing suitable transparency. \\
  \hline            
   --- & The splitting of the corresponding ultrafunction can recover the standard content of the solution.
\end{tabular}
\end{center}
\end{widetext}

\section{Conclusions\label{concl}}

In this paper, we have introduced a new approach to quantum mechanics using the advantages of non-Archimedean mathematics. It is based on ultrafunctions, non-Archimedean generalized functions defined on a hyperreal field of Euclidean numbers $\mathbb{E} = \mathbb{R}^*$, which is a natural extension of the field of real numbers. Euclidean numbers have been chosen because they allow constructing a non-Archimedean framework through a simple notion of $\Lambda$-limit, which gives an advantage concerning other nonstandard approaches to quantum mechanics~\cite{Farrukh, Albeverio1979, Bagarello, AlbeverioBook, Raab}.

The ultrafunctions approach proposes a new set of the axioms of quantum mechanics. Even though the new axioms are obviously related with the standard Dirac -- von Neumann formulation of quantum mechanics, ultrafunctions make it possible to build a framework which is closer to the matrix approach. Moreover, the presence of infinite and infinitesimal elements allows constructing new simplified models of physical problems. In particular, this framework is based on Hermitian operators, so that unbounded self-adjoint operators are not more needed. Furthermore, the difference between continuous and discrete spectra is not more present.

These aspects were illustrated by a comparison between standard and ultrafunction solutions of a Schr\"{o}dinger equation with a delta potential. We have shown that the ultrafunctions approach provides a more straightforward way to solve the Schr\"{o}dinger equation by unifying the cases of a potential well and a potential barrier and providing the extreme cases of infinite and infinitesimal transparency of the potential. Moreover, it gives a connection between energy and transparency, providing a bound for the allowed energies of the quantum system.

\begin{acknowledgements}
L.~Luperi~Baglini and K.~Simonov acknowledge financial support by the Austrian Science Fund (FWF): P30821.
\end{acknowledgements}

\bigskip

\bigskip

\end{document}